\newtheorem{theorem}{Theorem}[section]
                    \newcommand{\blind}{1}      
\begin{document}

\bibliographystyle{abbrvnat}

\def\spacingset#1{\renewcommand{\baselinestretch}%
{#1}\small\normalsize} \spacingset{1}


\if1\blind
{
  \title{\bf Using prior information to boost power in correlation structure support recovery}

  \author{Ziyang Ding \\
    Department of Statistics, Stanford University \\
    David Dunson \\
    Department of Statistical Science, Duke University}
  \maketitle
} \fi

\if0\blind
{
  \bigskip
  \bigskip
  \bigskip
  \bigskip
  \bigskip
  \bigskip
  \begin{center}
    {\LARGE\bf Using prior information to boost power in correlation structure support recovery}
\end{center}
  \medskip
} \fi

\bigskip
\begin{abstract}
Hypothesis testing of structure in correlation and covariance matrices is of broad interest in many application areas. In high dimensions and/or small to moderate sample sizes, high error rates in testing is a substantial concern.  This article focuses on increasing power through a frequentist assisted by Bayes (FAB) procedure.  This FAB approach boosts power by including prior information on the correlation parameters.  In particular, we suppose there is one of two sources of prior information: (i) a prior dataset that is distinct from the current data but related enough that it may contain valuable information about the correlation structure in the current data; and (ii) knowledge about a tendency for the correlations in different parameters to be similar, so that it is appropriate to consider a hierarchical model. When the prior information is relevant, the proposed FAB approach can have significant gains in power. A divide-and-conquer algorithm is developed to reduce computational complexity in massive testing dimensions. We show improvements in power for detecting correlated gene pairs in genomic studies, while maintaining control of Type I error or false discover rate (FDR). 
\end{abstract}

\noindent%
{\it Keywords:} Correlation structure testing; Frequentist assisted by Bayes; High-dimensional data; Multiple hypothesis testing; Prior information.
\vfill

\newpage
\spacingset{1.9}

\section{Introduction}

In this article, we consider the problem of inferring the locations of zeros in a
 $q \times q$ correlation matrix $\mathbf{P}$ given a sample of $n$ independent $q$-dimensional normal random vectors $\bx_1, \dots, \bx_n$, when prior information about entries of $\mathbf{P}$ is available. This problem can be expressed statistically as testing
\begin{eqnarray}
    H_{0wv}: \rho_{wv} = 0\quad \mbox{versus}\quad    
    H_{1wv}: \rho_{wv} \neq 0, \label{eq:hyp}
\end{eqnarray}
for each of the $q(q-1)/2$ unique entries of $\mathbf{P}$, with $\rho_{wv}$ denoting the correlation between the $w^{th}$ and $v^{th}$ variables. This testing problem is commonly known as \textit{support recovery} of a correlation structure, or correlation structure testing, which is a fundamental problem in multivariate analysis. There are important applications in finance \citep{chen2020testing}, biology \citep{van2019genetic}, and many other fields. 

This article proposes a methodology for improving power in frequentist support recovery by exploiting prior information.  This prior information takes one of two forms: (i) a prior dataset is available that contains information on the correlation parameters of interest in the current dataset, but with the prior dataset distinct enough so that it does not make sense to merge the datasets; and (ii) knowledge that the correlation parameters for different pairs of variables tend to be related, so that it is reasonable to borrow information. In case (i) for simplicity in exposition we assume that the external dataset $\mathbf{X}_{\text{ext}}$ has the same format as $\mathbf{X}$, though modifications to account for the case in which $\mathbf{X}_{\text{ext}}$ measures a distinct but overlapping set of variables are straightforward.  In case (ii) we assume that prior information relevant for testing $H_{0wv}$ vs $H_{1wv}$ takes the form of test statistics for other pairs of variables.  


Finding zeros in the correlation matrix is mainly achieved via one of two general approaches. The first is estimation-based, using a sparseness favoring correlation matrix estimator; for example, 
via $\ell_1$-penalized maximum likelihood. Related works include \cite{friedman2008sparse, bickel2008regularized, lam2009sparsistency, rothman2009generalized, cai2011constrained, fan2016overview} and \citet{cai2016estimating}. The second general approach is testing-based, in which we form hypotheses on individual correlation coefficients in the matrix being zero. Examples of this approach include work by \citet{yuan2006model, cai2011constrained, cai2013two, cai2016inference, kundu2019efficient} and \citet{lee2021maximum}. Refer to \citet{cai2017global} and \citet{na2021estimating} for an overview of related methods. This article focuses on the testing-based approach. We aim to improve the accuracy of correlation matrix support recovery through the use of prior information, either through external data or borrowing information from a current dataset.
Relevant external data may be from a previous related study that provides (potentially imperfect) information on the locations of zeros in ${\bf P}$.

Apart from research on correlation and covariance structure testing, there is also a previous related literature that seeks to improve power for hypothesis testing in general, either by exploiting prior information of various types or introducing auxiliary information. One popular approach that helps gain power for adaptive FDR-controlling procedures by estimating the overall proportion of true nulls is derived by \citet{storey2002direct}. Another general category of approaches include prior weights in the testing procedure, with these prior weights providing information on hypotheses that should be more or less likely {\em a priori}. \citet{genovese2006false} developed a weighted variation of the Benjamini-Hochberg (BH) procedure \citep{Benjamini1995}. Related approaches can be found in \citet{dobriban2015optimal} and \citet{dobriban2016general}. Other work has included prior information through other mechanisms; for example, by assuming the side information is independent of the $p$-values \citep{roquain2009optimal}, or by allowing for data-dependent weighting \citep{hu2010false, zhao2014weighted, ignatiadis2016data, durand2019adaptive, lei2018adapt, ignatiadis2021covariate, luo2025sequda}. Another idea is to 
exploit a prior ordering to concentrate power on more ``promising'' hypotheses near the top of the ordering \citep{Barber2015, g2016sequential, li2017accumulation, lei2016power}. The above methods provide a rich toolbox through improving power with the introduction of certain types of external information.  However, the focus has been largely on including direct prior information about the different hypotheses under consideration, with a set of $p$-values provided as input and not informed by this prior information - what is informed is the decision rule based on these $p$-values.  

In our work, focusing on the correlation support recovery problem, we take the different approach of including prior information relevant to the correlations under consideration, instead of the hypotheses directly, with this prior information used in obtaining the $p$-values.  The proposed method is based on the frequentist assisted by Bayes (FAB) procedure of \cite{hoff2021smaller}.  To simplify the methodology, we rely on Fisher-transformations of the correlation coefficients and associated asymptotic approximations.  Prior information is incorporated via a probabilistic model linking different hypothesis tests.  The resulting $p$-values can be used to maintain Type I error rates and FDR control.  When the prior information is appropriate, power can be improved relative to tests that ignore this information.  To reduce computational complexity and facilitate implementations in high dimensions, we introduce asymptotic approximations and a divide-and-conquer strategy, while providing asymptotic theory supporting these approaches.
    We provide examples incorporating external information of different types for FAB testing. For instance, if the entries of $\mathbf{P}$ are the population-level correlations among a set of $q$ genes, then knowledge about the positive regulation between a pair of genes in a well-studied biological pathway can constitute external information about $\mathbf{P}$. Alternatively, external information can take the form of a posited low-rank structure for $\mathbf{P}$, which determines groups of entries whose signs and magnitudes are similar.
    

\section{Preliminaries}

In this section, we briefly review some past techniques needed to construct our FAB correlation structure test. Section \ref{sec: fisher z-transform} reviews the well known Fisher variance stabilizing function under the context of correlation testing. Section \ref{sec: FAB testing} reviews the ``frequentist assisted by Bayes'' (FAB) approach developed by \citet{hoff2021smaller}, the methodology that we extend into the correlation structure testing setting. 

\subsection{Fisher-transformed correlation coefficient} \label{sec: fisher z-transform}

Our focus is on testing the hypotheses in (\ref{eq:hyp}).  Denote the sample Pearson correlation coefficient as
\begin{equation}
    \hat{\rho}_{wv} = \frac{\sum_{i=1}^{n}(x_{iw} - \bar{x}_w)(x_{iv} - \bar{x}_v)}{\sqrt{\sum_{i=1}^{n}(x_{iw} - \bar{x}_w)^2 \sum_{i=1}^{n}(x_{vk} - \bar{x}_v)^2}}
\end{equation}
for each of the $q(q-1)/2$ unique entries of $\mathbf{P}$. Under the null hypothesis $H_{0wv}$, the statistic
\begin{equation}
    t_{\rho_{wv}}(\bX) = \hat{\rho}_{wv}\sqrt{(n-2) /\left(1-\hat{\rho}_{wv}^{2}\right)}
\end{equation}
has an asymptotic $t$-distribution with $n-2$ degrees of freedom, so a level-$\alpha$ test can be constructed by rejecting $H_{0wv}$ when $|t_{\rho_{wv}}(\bX)|$ exceeds the $1 - \alpha/2$ quantile of the $t_{n-2}$ distribution. 
Typically, however, an approximately level-$\alpha$ test is constructed using Fisher's transformed test statistic \citep{fisher1915frequency}
\begin{equation} \label{eq:fisher equation}
    T_{\rho_{wv}}(\bX) = F(\hat{\rho}_{wv}) = \frac{1}{2} \log \frac{1+\hat{\rho}_{wv}}{1-\hat{\rho}_{wv}}
\end{equation}
which is approximately distributed as
\begin{equation}
    T_{\rho_{wv}}(\bX) \sim N\left(F(\rho_{wv}), \frac{1}{n-3}\right)
\end{equation}
Because its variance is approximately constant as a function of $\rho_{wv}$, the statistic $T_{\rho_{wv}}(\bX)$ can be used to construct more flexible and stable hypothesis tests than those based on $t_{\rho_{wv}}(\bX)$ or using the exact sampling distribution of $\hat{\rho}_{wv}$. The approximate normality of $T_{\rho_{wv}}(\bX)$ is useful to our strategy for incorporating external information into testing the dependence structure of $\mathbf{P}$.

\subsection{FAB Testing} \label{sec: FAB testing}

Given an iid sample $x_i \sim N(\theta, \sigma^2),~i=1, \dots, n$, suppose interest focuses on testing 
\[
    H_0 : \theta = 0\quad \mbox{versus}\quad H_1 : \theta \neq 0
\]
Supposing $\sigma$ is known, the $p$-value for the classical uniformly most powerful unbiased (UMPU) test has closed form
\[
p^{\text{UMPU}} = 1 - |\Phi( \overline{x} / \sigma) - \Phi( -\overline{x} / \sigma)|.
\]
Under the null hypothesis $H_0$, $p^{\mathrm{UMPU}}$ is uniformly distributed in the interval $(0, 1)$, so the Type I error rate of the UMPU test may be controlled at level $\alpha$ by rejecting the null when $p^{\mathrm{UMPU}} < \alpha$. The FAB test introduced by \cite{hoff2021smaller} leverages the fact that for any offset random variable $b$ that is independent of $\overline{x} / \sigma$, the  ``FAB $p$-value"
\begin{equation}
\label{eq:FAB_b}
\begin{aligned}
p^{\mathrm{FAB}} := 1 - |\Phi( \overline{x} / \sigma + b) - \Phi( -\overline{x} / \sigma)|
\end{aligned}
\end{equation}
is also uniformly distributed under the null hypothesis $H_0$. 
\begin{theorem} \label{thm:independent_b}
\citep{hoff2021smaller} Let $Z$ and $b$ be independent random variables with $Z \sim N(0,1)$. Then $\operatorname{Pr}(1-
|\Phi(Z+b)-\Phi(-Z)|<u)=u$
\end{theorem}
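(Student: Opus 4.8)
The plan is to condition on the offset and reduce the claim to a one-dimensional statement about the standard normal. Since $Z$ and $b$ are independent, it suffices to show that for every fixed value $b = \beta$ the conditional law of $U := 1 - |\Phi(Z+\beta) - \Phi(-Z)|$ is uniform on $(0,1)$; integrating the conditional distribution function against the law of $b$ then yields the unconditional claim. Writing $\Phi(-Z) = 1 - \Phi(Z)$, I would package the relevant quantity as $h_\beta(z) := \Phi(z+\beta) - \Phi(-z) = \Phi(z+\beta) + \Phi(z) - 1$, so that the event $\{U < u\}$ is exactly $\{|h_\beta(Z)| > 1-u\}$. Hence the whole statement reduces to proving that $|h_\beta(Z)|$ is uniform on $(0,1)$, i.e.\ that $\Pr(|h_\beta(Z)| \le c) = c$ for $c = 1 - u$.

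Next I would record the elementary structure of $h_\beta$. As a sum of two strictly increasing continuous functions shifted by a constant, $h_\beta$ is a strictly increasing continuous bijection from $\mathbb{R}$ onto $(-1,1)$, with limits $-1$ and $+1$ at $\mp\infty$. Consequently $\{z : |h_\beta(z)| \le c\}$ is the interval $[z_-, z_+]$, where $z_\pm := h_\beta^{-1}(\pm c)$, and therefore
\[
\Pr(|h_\beta(Z)| \le c) = \Phi(z_+) - \Phi(z_-).
\]
The task is then to show this difference equals $c$.

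The crux is a reflection identity for $h_\beta$. Applying $\Phi(-x) = 1 - \Phi(x)$ twice, a short computation gives $h_\beta(-z - \beta) = -h_\beta(z)$ for all $z$. This is the only nontrivial step, and everything hinges on spotting that the correct reflection is $z \mapsto -z - \beta$ rather than $z \mapsto -z$: the asymmetry introduced by the offset $\beta$ is exactly absorbed by the shift. By injectivity of $h_\beta$, this identity forces $z_- = -z_+ - \beta$. Substituting and using $\Phi(-x) = 1 - \Phi(x)$ once more gives $\Phi(z_-) = 1 - \Phi(z_+ + \beta)$, whence $\Phi(z_+) - \Phi(z_-) = \Phi(z_+) + \Phi(z_+ + \beta) - 1 = h_\beta(z_+) = c$, which closes the conditional argument.

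Finally I would undo the conditioning: since the conditional probability $\Pr(U < u \mid b = \beta) = u$ does not depend on $\beta$, taking the expectation over the law of $b$ (legitimate by independence of $Z$ and $b$) gives $\Pr(U < u) = u$, as required. Strict versus non-strict inequalities are immaterial throughout, since $h_\beta(Z)$ has a continuous distribution. I expect the reflection identity $h_\beta(-z-\beta) = -h_\beta(z)$ to be the main obstacle---not because the algebra is hard, but because the computation looks opaque until one identifies this symmetry; once it is in hand, the remainder is bookkeeping.
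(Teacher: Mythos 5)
Your proof is correct. One point of comparison is worth noting up front: the paper does not prove this statement at all---it is imported verbatim from \citet{hoff2021smaller} as a known result, so there is no ``paper proof'' to match against, and your argument stands as a self-contained reconstruction. Checking the details: conditioning on $b=\beta$ is legitimate by independence (Fubini), the reduction to showing $|h_\beta(Z)|\sim U(0,1)$ with $h_\beta(z)=\Phi(z+\beta)+\Phi(z)-1$ is exact, the monotone-bijection property of $h_\beta$ onto $(-1,1)$ is right, and the reflection identity $h_\beta(-z-\beta)=-h_\beta(z)$ does hold (both terms pick up a $\Phi(-x)=1-\Phi(x)$ flip and the shift by $\beta$ lands the arguments back on $z$ and $z+\beta$). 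From there $z_-=-z_+-\beta$, $\Phi(z_+)-\Phi(z_-)=h_\beta(z_+)=c$, and the unconditional claim follows by averaging over the law of $b$; the continuity remark disposes of strict-versus-weak inequalities. This is essentially the same mechanism as in Hoff's original proof, where the event $\{p^{\mathrm{FAB}}<u\}$ is written as a union of two tail events $\{Z>c_u\}\cup\{Z<-c_u-b\}$ whose probabilities sum to $u$ by the same reflection symmetry---your $z_\pm$ are exactly those thresholds---so you have correctly identified the one genuinely nontrivial ingredient (that the offset $\beta$ is absorbed by reflecting about $-\beta/2$ rather than $0$) that makes the FAB $p$-value exactly, not just approximately, uniform under the null.
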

Rejecting the null when $p^{\mathrm{FAB}} < \alpha$ is therefore also an $\alpha$-level frequentist test. However, the power of the FAB test may be greater than that of the UMPU test if $b$ is chosen carefully. We 
use external information to inform our choice of $b$ for correlation structure hypothesis testing. We reserve the explanation of our specific approach to incorporating external information for Section \ref{sec:Methods}, and highlight the most important aspects of the general FAB testing approach here:
\begin{enumerate}
    \item For the case of testing several hypotheses, a probability model---called a ``linking model"---is proposed. Combined with the sampling model for the test statistics, the linking model can be used to perform FAB tests for each parameter of interest; in our case, these are the $\rho_{jk}$.
    \item The offset term $b_{jk}$ for each hypothesis test must be inferred from information independent from the test statistics. It could be inferred either completely from an external dataset or from independent information within the same dataset. Details regarding the two situations will be discussed in Section \ref{sec: independent case} and Section \ref{sec: nonindependent case}.
    \item The design of the linking model impacts the power of FAB tests, but it does not affect the Type I error when $b_{jk}$ is independent of the test statistic.
\end{enumerate}


\begin{table}
\footnotesize
\begin{tabularx}{\textwidth}{p{0.12\textwidth}X}
\toprule
\textbf{Notation}               & \textbf{Definition} \\
\midrule
\textbf{Function} \\
$F(\cdot)$                      & Fisher's z-transformation. Defined in Equation \ref{eq:fisher equation}\\

\textbf{Variables}\\
$\bm{\Rho}, \bm{\rho}, \rho$                        & Matrix, vector, and scalar valued true correlation\\
$\mathbf{Z}, \bm{z}, z$                                 & Matrix, vector, and scalar valued Fisher-transformed true correlation\\
$\bm{\Omega_z}, \bm{\omega_z}, \omega_z$            & Matrix, vector, and scalar valued \textbf{true correlation of Fisher-transformed correlation estimators}: $\mathrm{Cor} (\widehat{z}_1, \widehat{z}_2)$\\
$m_j, v_j$                                          & FAB offset mean and variance of the $j^{th}$ test\\
$\bG_j$                                             & Decorrelation matrix of the $j^{th}$ test\\

\textbf{Estimator}\\
$\widehat{(\cdot)}$                                     & Estimator constructed from direct test statistics.\\
$\widetilde{(\cdot)}$                                   & Estimator constructed from indirect information conditionally independent from test statistics.\\

\textbf{Indexing}\\
$i \in \{1:n\}$                 & $i$ is the index of subjects $\{X_i\}_{i \in \{1:n\}} := \{X_1, X_2, \cdots X_n \}$\\
$j \in \{1:p\}$                 & $j$ is the index of test statistics $\{\widehat{z}_j\}_{j \in \{1:p\}} := \{\widehat{z}_1, \widehat{z}_2, \cdots \widehat{z}_p \}$\\
$k \in \{1:m\}$                 & $k$ is the index of hypothesis group $\{\widehat{\bm{z}}^{(k)}\}_{k \in \{1:m\}} := \{\widehat{\bm{z}}^{(1)}, \widehat{\bm{z}}^{(2)}, \cdots \widehat{\bm{z}}^{(m)} \}$\\
$n$                             & Total number of samples\\
$q$                             & Total number of dimensions of subject $\bX$\\
$p$                             & Total number of tests. $p = q$ if hypothesis tests are on sample mean. But $p = q(q-1)/2$ if tests are on pairwise correlation.\\
$m$                             & The total number hypothesis groups. Formal description introduced in Section \ref{sec: FAB corrStruct testing}.\\
$-j$                            & A negative sign before an index indicates every other index instead of this index. For example, $\widehat{\bm{z}}_{-j} := \{\widehat{z}_{1}, \cdots \widehat{z}_{j-1}, \widehat{z}_{j+1} \cdots , \widehat{z}_p\}$.\\

\bottomrule
\end{tabularx}
\vspace{10pt}
\caption{\small Notation of frequently used parameters} \label{table: notation}
\label{table:notation}
\end{table}
\normalsize

\section{Methodology} \label{sec:Methods}

In this section, we describe our methodology for FAB correlation structure testing. 
The commonly used notation in the paper is listed in Table \ref{table: notation}.
In Section \ref{sec: FAB cor testing} we adapt the methods in Section \ref{sec: FAB testing} to the setting of correlation coefficient testing. This section focuses on forming the FAB test on a single correlation coefficient given external information. In Section \ref{sec: FAB corrStruct testing} we address the computational challenge of performing $O(q^2)$ FAB tests for correlation structure by adopting a divide-and-conquer approach.

\subsection{FAB Correlation Testing} \label{sec: FAB cor testing}

As introduced in Section \ref{sec: fisher z-transform}, the distribution of the Fisher-transformed sample correlation coefficient $\widehat{z}_{wv}$ approximately follows the normal distribution
\begin{equation} \label{eq:summary_stat_dist}
\begin{aligned}
\widehat{z}_{wv} := F(\hat{\rho}_{wv}) \sim \cN \left( F(\rho_{wv}), \frac{1}{n-3} \right)
\end{aligned}
\end{equation}
The approximate normality of the statistic $z_j$ allows us to adapt the FAB methodology for testing described in Section \ref{sec: FAB testing}. For notational simplicity in what follows, we simplify the indexing by using a single index $j=1,\ldots,p=q(q-1)/2$ for the different $w,v$ pairs.


Suppose we are testing the $j^{th}$ correlation $\rho_j$ based on its observed Fisher transformed correlation coefficient $\widehat{z}_j$. The FAB correlation test improves power of the test on $\widehat{z}_j$ by borrowing information from some other statistics $\widehat{\bm{z}}_{j}^\prime$, which can be statistics from a purely external dataset $\widehat{\bm{z}}_{-j}^{\text{ext}}$ (take $\widehat{\bm{z}}_{j}^\prime = \widehat{\bm{z}}_{-j}^{\text{ext}}$ as in Section \ref{sec: independent case}) or can be statistics from other tests $\widehat{\bm{z}}_{-j}$ within the same dataset as $\widehat{z}_j$ (take $\widehat{\bm{z}}_{j}^\prime = \widehat{\bm{z}}_{-j}$ as in Section \ref{sec: nonindependent case}). The test statistics $z_j$ and the indirect information source $\bm{z}_j^\prime$ are modeled jointly using the following model:
\begin{align}
    \widehat{\bm{z}} = 
    \begin{bmatrix}
    \widehat{z}_j \\
    \widehat{\bm{z}}_j^{\prime}
    \end{bmatrix}
    &\sim \cN \left( \bm{z} := 
    \begin{bmatrix}
    z_j\\ 
    \bm{z}_j^{\prime}
    \end{bmatrix}
    , \quad \frac{1}{n-3} \bm{\Omega_z} := \frac{1}{n-3}\mathrm{Cor}(\widehat{\bm{z}}) \right) \label{eq:FAB corr sampling model} \\
    \bm{z} &\sim \cN(\bm{\mu} := \mathbf{W}\bm{\eta}, \bm{\Psi}) \label{eq:FAB corr linking model}
\end{align}

Equation \ref{eq:FAB corr sampling model}, which specifies the distribution of the observed Fisher-transformed correlation coefficient $\widehat{z}_j := F(\widehat{\rho_j})$ and that of other observed statistics $\widehat{\bm{z}}_j^{\prime}$, is called the sampling model. Equation \ref{eq:FAB corr linking model}, which is a custom designed prior distribution on the true Fisher-transformed correlation $z_j$ that ``links'' the true parameter $z_j$ with $\bm{z}_j'$, is called the linking model. This linking model allows sharing of information between $z_j$ and $\bm{z}_j'$ with $\bm{\mu}$ having a lower dimensional representation $\bm{\eta}$ with $\mathrm{dim}(\bm{\eta}) < \mathrm{dim}(\bm{\mu})$. The lower dimensional structure is characterized by a factor model as $\bm{\mu} = \mathbf{W}\bm{\eta}$, in which the factor loading matrix $\mathbf{W}$ is pre-defined, whereas $\bm{\eta}$ is unknown and will be estimated using empirical Bayes. When testing is based on $\widehat{z}_j$, FAB borrows indirect information from $\widehat{\bm{z}}_j^{\prime}$ in the sampling model and shares it with $\widehat{z}_j$ via the linking model, therefore improving power. 

As described in \cite{hoff2021smaller}, we can form a FAB test that is approximately Bayes-optimal with respect to power using the following form for the offset terms $b_j$:
\begin{align}
    p^{\text{FAB}}_j &:= 1 - |\Phi( \widehat{z_j} \sqrt{n-3} + \widetilde{b_j}) - \Phi( -\widehat{z_j} \sqrt{n-3})| \label{eq:FAB corr p}\\
    \widetilde{b_j} := \frac{2\widetilde{m_j}}{\widetilde{v_j}\sqrt{n-3}} \quad 
    \widetilde{m_j} &:= \bE  \left[   z_j | \widetilde{\bm{\mu}}(\widetilde{z}_j^{\idpt}), \widetilde{\bm{\Psi}}(\widetilde{z}_j^{\idpt}), \widetilde{z}_j^{\idpt} \right] \quad
    \widetilde{v_j} := \bVar\left[   z_j | \widetilde{\bm{\mu}}(\widetilde{z}_j^{\idpt}), \widetilde{\bm{\Psi}}(\widetilde{z}_j^{\idpt}), \widetilde{z}_j^{\idpt} \right] \nonumber    
\end{align}
where $\widetilde{z}_j^{\idpt}$ is the indirect information, drawn from $\widehat{\bm{z}}$, that is independent from the test statistics $\widehat{z}_j$. $\widetilde{\bm{\mu}}, \widetilde{\bm{\Psi}}$ are estimators of $\bm{\mu}$, $\bm{\Psi}$ is estimated only from indirect information  $\widetilde{z}_j^{\idpt}$ to ensure independence with test statistics $\widehat{z}_j$ as required by Theorem \ref{thm:independent_b}, and $\widetilde{m}_j, \widetilde{v}_j$ are the posterior mean and variance of $z_j$ based on the empirical Bayes prior $\bm{z} \sim \cN(\widetilde{\bm{\mu}}, \widetilde{\bm{\Psi}})$ and indirect data likelihood $\mathcal{L}(\widetilde{z}_j^{\idpt}; \bm{z})$. Therefore, $\widetilde{\bm{\mu}}, \widetilde{\bm{\Psi}}, \widetilde{m}_j, \widetilde{v}_j$ are functions of only $\widetilde{z}_j^{\idpt}$, and $g: \widehat{\bm{z}}  \mapsto \widetilde{z}_j^{\idpt} $ is a ``make-independence'' function $g$ of $\widehat{\bm{z}}$ to ensure all the later calculated statistics, $\widetilde{\bm{\mu}}, \widetilde{\bm{\Psi}}, \widetilde{m}_j, \widetilde{v}_j$, are independent with $\widehat{z}_j$.


The construction of such $g$ functions depends on the application scenario. If $\widehat{\bm{z}}_{j}^\prime$ comes from an external data source (scenario when $\widehat{\bm{z}}_{j}^\prime = \widehat{\bm{z}}_{-j}^{\text{ext}}$), the correlation matrix $\bm{\Omega_z}$ of $\widehat{\bm{z}}$ is block-diagonal. Hence, the independent indirect information $\widehat{z}_{i}^{\idpt}$ can be easily sourced from $\widehat{\bm{z}}$ with a trivially constructed function $g$. However, if $\widehat{\bm{z}}_{j}^\prime$ is sourced from other test statistics within the same dataset as $\widehat{z}_j$ (scenario when $\widehat{\bm{z}}_{j}^\prime = \widehat{\bm{z}}_{-j}$), the correlation matrix $\bm{\Omega_z}$ of $\widehat{\bm{z}}$ may be unknown. Under such circumstances, construction of $g$ requires more effort.
 
We split the discussion according to the two different scenarios - external or internal information.  Section \ref{sec: independent case} introduces cases when $\widehat{\bm{z}}_{j}^\prime = \widehat{\bm{z}}_{-j}^{\text{ext}}$, so that $\widehat{z_j} \perp\!\!\!\perp \widehat{\bm{z}}_j^{\prime}$ innately. Section \ref{sec: nonindependent case} discusses when $\widehat{\bm{z}}_{j}^\prime = \widehat{\bm{z}}_{-j}$ is sourced from the same dataset so that independence is not automatic.
 
\subsubsection{\texorpdfstring{$\widehat{\bm{z}}_j^{\prime} = \widehat{\bm{z}}_{-j}^{\text{ext}}$}{Lg}: Sampling Model Using External Data } \label{sec: independent case}

This section focuses on the scenario when $\widehat{\bm{z}}_j^{\prime}$ is calculated using an external dataset, so that $\widehat{z_j} \perp\!\!\!\perp \widehat{\bm{z}}_j^{\prime}$ conditional on the true parameter. The method assumes the historical or external dataset shares a similar correlation structure as the current dataset.  For example, in studying dependence across genes in their expression levels using single cell RNAseq data, it is common to collect data under similar conditions.
For instance, in the current setting, we may be interested in identifying correlated pairs of genes for squamous epithelium cells, while borrowing information from data previously collected for cuboidal epithelium cells.
There is substantial similarity in mRNA expression profiles  
for related cell types \citep{van2003cells}.  Hence, 
 we treat the latter (cuboidal) dataset as the external dataset, from which $\widehat{\bm{z}}_j^{\prime} = \widehat{\bm{z}}_{-j}^{\text{ext}}$ are calculated. As $\widehat{\bm{z}}_j^{\prime}$ is from a different dataset that is independent from $\widehat{z_{j}}$, the required independence of $\widetilde{m_j}, \widetilde{v_j}$ is satisfied. The following sampling model summarizes the distribution of $(\widehat{z_{j}}, \widehat{\bm{z}}_j^{\prime} )^\top$
\begin{align}
    \widehat{\bm{z}} = 
    \begin{bmatrix}
        \widehat{z}_j \\
        \widehat{\bm{z}}_j^{\prime} := \widehat{\bm{z}}_{-j}^{\text{ext}}
    \end{bmatrix}
    &\sim \cN \left(
    \bm{z} = 
    \begin{bmatrix}
        z_j\\
        \bm{z}_j^{\prime} := \bm{z}_{-j}^{\text{ext}}
    \end{bmatrix}
    , 
    \frac{1}{n-3} 
    {\bm{\Omega_{z}}}
    =
    \frac{1}{n-3} 
    \begin{bmatrix}
        1 & \bm{0}\\
        \bm{0} & \bm{\Omega}_{\bm{z}_j^\prime}
    \end{bmatrix}
    \right) \label{eq: temp arrangement}
\end{align}

The function $g$ can be simply constructed as a decorrelation matrix $\bG_{j}$, a $p \times (p-1)$ dimensional matrix obtained by deleting the column corresponding to $\widehat{z}_j$ from $\mathbf{I}_p$. In the arrangement of Equation \ref{eq: temp arrangement}, as $\widehat{z}_j$ is in the first dimension of $\widehat{\bm{z}}$, the first column of $\mathbf{I}_p$ is deleted to construct $\mathbf{G}_j$

\[
\mathbf{G}_j = 
\begin{bmatrix}
    \bm{0}_{1 \times (p-1)}\\
    \mathbf{I}_{(p-1)\times (p-1)}
\end{bmatrix}
\]

As a result, the indirect information $\bG_{j}^\top \widehat{\bm{z}} = \widetilde{z}_j^{\idpt}$ is independent from $\widehat{z}_j$. Thus, we can obtain an empirical Bayes estimator for $\widetilde{\bm{\eta}}, \widetilde{\bm{\Psi}}$ solely from the marginal distribution of the indirect information $\widetilde{z}_j^{\idpt} = \mathbf{G}_{j}^{\top} \widehat{\bm{z}}$ as in Equation \ref{eq: marginal indirect distribution}. The estimator can be obtained via maximum likelihood or $l_2$ penalization, as long as it is independent from $\widehat{z}_j$.

\begin{align}
\mathbf{G}_{j}^{\top} \widehat{\bm{z}} &\sim \cN_{p-1}\left(\mathbf{G}_{j}^{\top} \mathbf{W}\bm{\eta}, \mathbf{G}_{j}^{\top} (\frac{1}{n-3}\bm{\Omega_z} + \bm{\Psi} ) \mathbf{G}_{j}\right) \label{eq: marginal indirect distribution}
\end{align}
The posterior of $\bm{z}$ given the indirect information $\widetilde{z}_j^{\idpt}$ and the empirical Bayes estimated prior parameters $\widetilde{\bm{\eta}}, \widetilde{\bm{\Psi}}$ can be easily obtained using linear algebra:
\begin{equation} \label{eq:FAB pos mean var}
    \begin{aligned}
    \bm{z} &| \mathbf{G}^{\top}_j \widehat{\bm{z}}, \widetilde{\bm{\eta}}, \widetilde{\bm{\Psi}} \sim \cN_p( \bm{m}, \mathbf{V}) \\
    \mathbf{V} &=\left[\widetilde{\bm{\Psi}}^{-1}+(n-3)\mathbf{G}_{j}\left(\mathbf{G}_{j}^{\top} \bm{\Omega_z} \mathbf{G}_{j}\right)^{-1} \mathbf{G}_{j}^{\top}\right]^{-1} \\
    \bm{m} &=\mathbf{V}\left[\widetilde{\bm{\Psi}}^{-1} \mathbf{W}\widetilde{\bm{\eta}}+(n-3)\mathbf{G}_{j}\left(\mathbf{G}_{j}^{\top} \bm{\Omega_z} \mathbf{G}_{j}\right)^{-1} \mathbf{G}_{j}^{\top} \widehat{\bm{z}}\right]
    \end{aligned}
\end{equation}

With $\bm{m}, \mathbf{V}$ calculated, the FAB $p$-value can then be simply calculated as in Equation \ref{eq:FAB corr p}, where $\widetilde{m}_j, \widetilde{v}_j$ are defined as the posterior mean and variance of $z_j$. In our arrangement of Equation \ref{eq: temp arrangement}, $\widetilde{m}_j, \widetilde{v}_j$ are the first row of $\bm{m}$ and the first row first column element of $\mathbf{V}$ respectively, as $z_j$ is in the first dimension of $\bm{z}$.

\subsubsection{ \texorpdfstring{$\widehat{\bm{z}}_j^{\prime} = \widehat{\bm{z}}_{-j}$}{Lg}: Sampling Model Using Internal Data  } \label{sec: nonindependent case}

In the absence of an external dataset containing a similar correlation structure, we can borrow information across the different Fisher-transformed correlation coefficients in the same dataset.  It is well known that borrowing across seemingly unrelated parameters can yield statistical dividends.  In the motivating gene expression application, it is plausible to suppose that the correlations between other pairs of genes are informative about the correlation between a particular pair of interest.  This motivates letting  $\widehat{\bm{z}}_j^{\prime} = \widehat{\bm{z}}_{-j}$ to draw indirect information from other hypotheses' test statistics $\widehat{\bm{z}}_{-j}$ to assist the testing on $z_j$. However, independence between $\widehat{\bm{z}}_j^{\prime}$ and $\widehat{z}_j$ does not hold any longer. Furthermore, the correlation matrix $\bm{\Omega_z}$ is completely unknown. Additional estimations and statistical procedures on $\bm{\Omega_z}$ are needed to ensure the validity of the FAB correlation test. The formal sampling model in this scenario is specified as 
\begin{align}
    \widehat{\bm{z}} = 
    \begin{bmatrix}
        \widehat{z}_j \\
        \widehat{\bm{z}}_j^{\prime} := \widehat{\bm{z}}_{-j}
    \end{bmatrix}
    &\sim \cN \left(
    \bm{z} = 
    \begin{bmatrix}
        z_j\\
        \bm{z}_j^{\prime} := \bm{z}_{-j}
    \end{bmatrix}
    , 
    \frac{1}{n-3} 
    {\bm{\Omega_{z}}}
    \right) \label{eq: arrangement nonindependent}
\end{align}


First, assuming that $\bm{\Omega_z}$ is known and $\widehat{z_j} \not{\perp\!\!\!\perp} \widehat{\bm{z}}_{-j}$, we can extract independent information by constructing $\mathbf{G}_j$ as a decorrelation matrix, whose columns form a basis of the null space of $\widehat{z}_j$'s corresponding column of $\bm{\Omega_z}$ (which is the first column of $\bm{\Omega_z}$ in the arrangement of Equation \ref{eq: arrangement nonindependent}). As $\widehat{\bm{z}}$ is normally distributed, the indirect information $\mathbf{G}_j^\top \widehat{\bm{z}} = \widetilde{z}_j^{\idpt} $ is  independent from $\widehat{z}_j$. However, in reality $\bm{\Omega_z}$ is unknown and requires estimation. In fact, we can construct a consistent estimator $\widehat{\bm{\Omega_z}}$ for $\bm{\Omega_z}$, from which we can also construct a consistent estimator $\widehat{\mathbf{G}}_j$ for the decorrelation matrix $\mathbf{G}_j$ due to the continuous mapping theorem. This consistent estimator of $\widehat{\mathbf{G}}_j$ suffices to provide asymptotic independence between $\widehat{z_j}$ and $\widehat{\mathbf{G}}_j^\top \widehat{\bm{z}}$.

\begin{theorem} \label{thm: consistency control}
Let $\widehat{\bm{\Omega_z}} \stackrel{pr.}{\longrightarrow} \bm{\Omega_z} $ and $\sqrt{n+3}(\widehat{\boldsymbol{z}}-\boldsymbol{z}) \stackrel{dist.}{\longrightarrow} \bm{r}$, where $\bm{r} \sim N_{p}(\mathbf{0}, \bm{\Omega_z}) .$ Then as $n \to \infty$
\begin{enumerate}
    \item $\operatorname{Cor}\left[\widehat{\bG}_{j}^{\top} \widehat{\boldsymbol{z}}, \widehat{z}_{j}\right] \to \mathbf{0}$
    \item $\operatorname{Pr}\left(\left\{\sqrt{n} \widehat{\bG}_{j}^{\top}(\widehat{\boldsymbol{z}}-\boldsymbol{z}) \in A\right\} \cap\left\{\sqrt{n}\left(\widehat{z}_{j}-z_{j}\right) \in B\right\}\right) \to \operatorname{Pr}\left(\bG_j^{\top} \bm{r} \in A\right) \times \operatorname{Pr}\left(r_{j} \in B\right)$
\end{enumerate}
where the columns of $\bG_{j}$ form a basis spanning the null space of the $\widehat{z}_j$-corresponding column of $\bm{\Omega_z}$; the columns of $\widehat{\bG}_{j}$ form a basis spanning the null space of the $\widehat{z}_j$-corresponding column of $\widehat{\bm{\Omega_z}}$; and $A \subset \mathbb{R}^{n-1}$ and $B \subset \mathbb{R}$ are measurable sets.
\end{theorem}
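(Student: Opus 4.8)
The plan is to reduce the whole statement to two ingredients: an exact algebraic orthogonality satisfied by the \emph{population} decorrelation matrix, and the consistency $\widehat{\bG}_j \stackrel{pr.}{\longrightarrow} \bG_j$; once these are in hand, both conclusions follow by Slutsky-type arguments. Write $\bm{\omega}_j := \bm{\Omega_z}\bm{e}_j$ for the $\widehat{z}_j$-corresponding column (with $\bm{e}_j$ the $j^{th}$ canonical basis vector), and note $\widehat{z}_j = \bm{e}_j^\top \widehat{\bm{z}}$. Since the columns of $\bG_j$ span the null space of $\bm{\omega}_j$, we have the exact identity $\bG_j^\top \bm{\omega}_j = \bm{0}$. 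Because $\widehat{\bm{z}}$ is (asymptotically) normal with covariance proportional to $\bm{\Omega_z}$, this forces $\operatorname{Cov}(\bG_j^\top \widehat{\bm{z}}, \widehat{z}_j) = \tfrac{1}{n-3}\bG_j^\top \bm{\Omega_z}\bm{e}_j = \tfrac{1}{n-3}\bG_j^\top\bm{\omega}_j = \bm{0}$, and likewise for the Gaussian limit $\operatorname{Cov}(\bG_j^\top\bm{r}, r_j) = \bG_j^\top\bm{\Omega_z}\bm{e}_j = \bm{0}$; joint Gaussianity then upgrades zero covariance to independence. This is the population backbone, and everything else consists in showing that the \emph{estimated} $\widehat{\bG}_j$ inherits it in the limit.

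First I would establish $\widehat{\bG}_j \stackrel{pr.}{\longrightarrow}\bG_j$. From $\widehat{\bm{\Omega_z}}\stackrel{pr.}{\longrightarrow}\bm{\Omega_z}$ the corresponding column satisfies $\widehat{\bm{\Omega_z}}\bm{e}_j \stackrel{pr.}{\longrightarrow}\bm{\omega}_j$. Because $\bm{\Omega_z}$ has unit diagonal, $\bm{\omega}_j \neq \bm{0}$, so its null space has constant dimension $p-1$ in a neighborhood of $\bm{\Omega_z}$; on that neighborhood a basis of the null space (equivalently the orthogonal projector $\mathbf{I}_p - \bm{\omega}_j\bm{\omega}_j^\top/\|\bm{\omega}_j\|^2$ followed by a fixed orthonormalization) is a continuous function of the column, so the continuous mapping theorem gives $\widehat{\bG}_j \stackrel{pr.}{\longrightarrow}\bG_j$ and hence $\widehat{\bG}_j^\top\bm{\omega}_j \stackrel{pr.}{\longrightarrow}\bG_j^\top\bm{\omega}_j = \bm{0}$. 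For claim~1, treating $\widehat{\bG}_j$ as fixed at its realized value, the $\tfrac{1}{n-3}$ factors cancel in the correlation, leaving componentwise $(\widehat{\bG}_j^\top\bm{\omega}_j)_i$ over $\sqrt{(\widehat{\bG}_j^\top \bm{\Omega_z}\widehat{\bG}_j)_{ii}}\,\sqrt{\omega_{jj}}$; the numerator tends to $\bm{0}$ while the denominator converges to a positive limit, so $\operatorname{Cor}[\widehat{\bG}_j^\top \widehat{\bm{z}}, \widehat{z}_j]\to\bm{0}$.

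For claim~2 I would write $\sqrt{n}(\widehat z_j - z_j) = \bm{e}_j^\top \sqrt{n}(\widehat{\bm{z}}-\bm{z})$ and decompose the other coordinate as $\widehat{\bG}_j^\top\sqrt{n}(\widehat{\bm{z}}-\bm{z}) = \bG_j^\top\sqrt{n}(\widehat{\bm{z}}-\bm{z}) + (\widehat{\bG}_j-\bG_j)^\top\sqrt{n}(\widehat{\bm{z}}-\bm{z})$. The hypothesis $\sqrt{n+3}(\widehat{\bm{z}}-\bm{z})\stackrel{dist.}{\longrightarrow}\bm{r}$ together with $\sqrt{n+3}/\sqrt{n}\to 1$ gives $\sqrt{n}(\widehat{\bm{z}}-\bm{z})\stackrel{dist.}{\longrightarrow}\bm{r}$, so this quantity is $O_p(1)$; multiplied by the $o_p(1)$ factor $\widehat{\bG}_j-\bG_j$, the cross term is $o_p(1)$ (a product of $o_p(1)$ and $O_p(1)$, valid even though $\widehat{\bG}_j$ and $\widehat{\bm{z}}$ may be dependent). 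Hence the joint vector equals $\big(\bG_j^\top\sqrt{n}(\widehat{\bm{z}}-\bm{z}),\, \bm{e}_j^\top\sqrt{n}(\widehat{\bm{z}}-\bm{z})\big) + (o_p(1),0)$, which by the continuous mapping theorem and Slutsky converges in distribution to $(\bG_j^\top\bm{r}, r_j)$. Since this limit is jointly Gaussian with the zero cross-covariance computed above, its two blocks are independent, so its law factorizes; as the limit is absolutely continuous (here $[\bG_j,\bm{e}_j]$ is invertible because $\bm{\omega}_j^\top\bm{e}_j=\omega_{jj}\neq 0$, so the limit is nondegenerate), convergence in distribution yields the stated factorization for all continuity sets $A$ and $B$.

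The main obstacle is the consistency step $\widehat{\bG}_j\stackrel{pr.}{\longrightarrow}\bG_j$: a basis of a null space is not a canonical continuous function of a matrix, owing to the freedom in choosing the basis, so one must argue that the column $\bm{\omega}_j$ stays bounded away from $\bm{0}$ (guaranteed by the unit diagonal) and fix a construction---such as the projector onto $\operatorname{span}(\bm{\omega}_j)^\perp$ with a deterministic orthonormalization---that is genuinely continuous in a neighborhood of $\bm{\Omega_z}$, so the continuous mapping theorem applies. By contrast, the potential statistical dependence between $\widehat{\bG}_j$ and $\widehat{\bm{z}}$ is \emph{not} an obstacle, precisely because the argument uses only the order relation $o_p(1)\cdot O_p(1)=o_p(1)$, which holds without any independence assumption.
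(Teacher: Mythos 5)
Your proof is correct and takes essentially the same route as the paper's: consistency $\widehat{\bG}_j \stackrel{pr.}{\longrightarrow} \bG_j$ via the continuous mapping theorem, a Slutsky-type argument giving joint convergence of $\bigl(\widehat{\bG}_j^\top\sqrt{n}(\widehat{\bm{z}}-\bm{z}),\,\sqrt{n}(\widehat{z}_j-z_j)\bigr)$ to $(\bG_j^\top\bm{r},\,r_j)$, and independence of the Gaussian limit blocks from the orthogonality $\bG_j^\top\bm{\Omega_z}\bm{e}_j=\bm{0}$. The only difference is that the paper compresses the Slutsky step by stacking $\widehat{\mathbf{H}}_j=[\bm{e}_j,\widehat{\bG}_j]$ and asserts $\bG_j^\top\bm{r}\perp\!\!\!\perp r_j$ without computation, whereas you unpack the $o_p(1)\cdot O_p(1)$ decomposition, justify the independence explicitly, and correctly flag the continuity-set caveat that the paper's ``for all measurable $A,B$'' phrasing glosses over.
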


Proof of Theorem \ref{thm: consistency control} is in supplementary material. Hence, the challenge is in obtaining a consistent estimator of $\bm{\Omega_z}$, the correlation matrix of a set of Fisher-transformed correlation coefficients $\widehat{\bm{z}}$. As we have only one measurement of  $\widehat{\bm{z}}$, to consistently estimate $\bm{\Omega_z}$, we apply the bootstrap.

\begin{theorem} \label{thm:bootstrap consistency of big matrix}
Let $\widehat{\mathcal{Z}^*_B} := \{\widehat{\bm{z}}^*_1, \widehat{\bm{z}}^*_2, \cdots, \widehat{\bm{z}}^*_B\}$ be $B$ bootstrap samples. Define $\widehat{\bm{\Omega_z}}^*$ as the bootstrap estimator of $\bm{\Omega_z}$ based on the $B$ bootstrap samples $\widehat{\mathcal{Z}^*_B}$. Then,
$\widehat{\bm{\Omega_z}}^* \stackrel{pr.}{\longrightarrow} \bm{\Omega_z}.$
\end{theorem}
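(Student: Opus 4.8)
The plan is to treat $\widehat{\bm{z}}$ as a smooth functional of empirical moments and to chain together four ingredients: the delta method, distributional consistency of the bootstrap, convergence of the bootstrap \emph{second} moments, and the law of large numbers over the $B$ replicates. First I would write each coordinate $\widehat{z}_j = F(\widehat{\rho}_{wv})$ as $\phi_j(\bar{\bm{y}}_n)$, where $\bar{\bm{y}}_n = n^{-1}\sum_{i=1}^n \psi(\bx_i)$ collects the first and second empirical moments needed to form every sample variance and covariance, and $\psi$ is a fixed quadratic map. Because the Pearson correlation and Fisher's $F$ are continuously differentiable wherever the variances are positive and $|\rho|<1$, the composite $\phi=(\phi_1,\dots,\phi_p)$ is $C^1$ on a neighborhood of the true moment vector $\bm{\mu}_y := \bE\,\psi(\bx)$. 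Since $\bx$ is Gaussian, all moments of $\psi(\bx)$ are finite, so the multivariate delta method gives $\sqrt{n}(\widehat{\bm{z}} - \bm{z}) \stackrel{dist.}{\longrightarrow} N_p(\bm{0},\bm{\Sigma})$ with $\bm{\Sigma} = D\phi(\bm{\mu}_y)\,\mathrm{Cov}(\psi(\bx))\,D\phi(\bm{\mu}_y)^\top$, and the correlation matrix obtained from $\bm{\Sigma}$ equals $\bm{\Omega_z}$ by the very definition $\bm{\Omega_z}=\mathrm{Cor}(\widehat{\bm{z}})$.

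Next I would establish bootstrap consistency at the distributional level. Resampling the subjects $\bx_1,\dots,\bx_n$ with replacement and recomputing $\widehat{\bm{z}}^*$, the bootstrap central limit theorem for sample means together with the delta method for the bootstrap (van der Vaart) gives that, conditionally on the data and for almost every sample path, $\sqrt{n}(\widehat{\bm{z}}^* - \widehat{\bm{z}})$ converges in distribution to the same Gaussian law $N_p(\bm{0},\bm{\Sigma})$.

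The key step, and what I expect to be the main obstacle, is upgrading this distributional statement to convergence of the bootstrap covariance: I must show $n\,\mathrm{Cov}(\widehat{\bm{z}}^*\mid \bx_{1:n}) \stackrel{pr.}{\longrightarrow} \bm{\Sigma}$, not merely that the bootstrap law converges. Convergence in distribution does not transfer to second moments without uniform integrability, and the difficulty is sharpened here by the fact that $F$ is unbounded as $\widehat{\rho}\to\pm 1$: a resample can, with small probability, produce a sample correlation near $\pm 1$. I would control this by truncation, working on the high-probability event that all bootstrapped correlations stay inside a fixed compact subinterval of $(-1,1)$ (whose complement has probability vanishing fast enough to annihilate its contribution); on that event a first-order Taylor expansion of $\phi$ with remainder bounded over the compact set yields uniform integrability of $\{n\|\widehat{\bm{z}}^* - \widehat{\bm{z}}\|^2\}$. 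Equivalently one may invoke the established consistency of the bootstrap variance estimator for smooth functions of a mean (Shao and Tu). Either route delivers $n\,\mathrm{Cov}(\widehat{\bm{z}}^*\mid \bx_{1:n}) \stackrel{pr.}{\longrightarrow} \bm{\Sigma}$.

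Finally I would assemble the pieces. Each coordinate of $\widehat{\bm{z}}$ has strictly positive limiting variance, so the diagonal of $\bm{\Sigma}$ is bounded away from zero and the map $\bm{\Sigma}\mapsto \mathrm{diag}(\bm{\Sigma})^{-1/2}\,\bm{\Sigma}\,\mathrm{diag}(\bm{\Sigma})^{-1/2}$ is continuous at $\bm{\Sigma}$; the continuous mapping theorem then turns the covariance convergence into convergence of the conditional correlation matrix $\mathrm{Cor}(\widehat{\bm{z}}^*\mid \bx_{1:n})$ to $\bm{\Omega_z}$. For fixed $n$ the bootstrap law is finitely supported given the data, so all conditional moments are finite, the replicates $\widehat{\bm{z}}^*_1,\dots,\widehat{\bm{z}}^*_B$ are conditionally i.i.d., and as $B\to\infty$ the empirical correlation matrix $\widehat{\bm{\Omega_z}}^*$ converges to $\mathrm{Cor}(\widehat{\bm{z}}^*\mid \bx_{1:n})$ by the strong law of large numbers and continuity of the correlation map. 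Combining the two limits along $n,B\to\infty$ yields $\widehat{\bm{\Omega_z}}^* \stackrel{pr.}{\longrightarrow} \bm{\Omega_z}$, as claimed.
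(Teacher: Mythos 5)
Your proposal is correct in substance but follows a genuinely different architecture from the paper's own argument. The paper reduces Theorem \ref{thm:bootstrap consistency of big matrix} entrywise to the bivariate case (Theorem \ref{thm:bootstrap consistency}) and proves that case by treating the $B$ bootstrap replicates themselves as the sample: the Pearson correlation across replicates is a smooth function of five empirical moments, so bootstrap consistency for multivariate sample means (Theorem \ref{thm: multivariate mean}, cited from Shao) plus the continuous mapping theorem yields Kolmogorov-metric consistency of its sampling distribution (Theorem \ref{thm: distribution converge}), and convergence in probability of $\widehat{\omega_z}^*$ is then extracted from that distributional statement together with ordinary consistency of the Pearson correlation. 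You instead keep the whole vector and split the double limit explicitly: a conditional bootstrap CLT in $n$, an upgrade to convergence of the conditional covariance $n\operatorname{Cov}(\widehat{\bm{z}}^{*}\mid \mathbf{X})$ via uniform integrability, continuous mapping to the correlation matrix, and then a law of large numbers in $B$ over the conditionally i.i.d.\ replicates. Your route makes explicit two things the paper glosses over: (i) convergence in distribution does not by itself give convergence of correlations, which are second moments --- exactly where the unboundedness of $F$ near $\pm 1$ bites --- whereas the paper passes from Kolmogorov consistency to consistency of $\widehat{\omega_z}^*$ without comment; and (ii) $n$ (bootstrap validity) and $B$ (Monte Carlo error over replicates) play distinct roles, which the paper's proof never disentangles. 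What the paper's route buys is boundedness for free: the statistic it analyzes is itself an empirical correlation, hence always in $[-1,1]$, so conditional moments of the bootstrap distribution are never needed (though its ``clearly $\widehat{\bm{z}}\in\mathcal{L}_4$'' step hides a similar degeneracy issue, since the resampling distribution puts positive mass on resamples with $|\widehat{\rho}^{*}|=1$, where $F$ is infinite). One repair to your write-up: on the event of a degenerate resample the conditional covariance is infinite, so an event of vanishing probability cannot simply be ``annihilated'' at the level of moments; the truncation must be applied to the statistic itself (or you can pass to the bounded empirical correlation across replicates, as the paper's formulation implicitly does), after which your plan goes through.
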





The proof of Theorem \ref{thm:bootstrap consistency of big matrix} is in supplementary material. 
Under such consistency, the distribution of each FAB $p$-value converges to a uniform($0,1$) distribution asymptotically. Hence, it is reasonable to use $\widehat{\bm{\Omega_z}}^*$ as $\bm{\Omega_z}$ to construct the decorrelation matrix $\widehat{\bG_j}$. The FAB $p$-value inference algorithm is summarized in Algorithm \ref{alg: FAB correlation test}.

{
\vspace{0.25in}
\begin{algorithm}[H] \label{alg: FAB correlation test}
\caption{General Bootstrap FAB correlation test}
\label{algo:Offline}
\SetAlgoLined
\vspace{0.15in}
\KwResult{ FAB $p$-values for $p = q(q-1)/2$ correlation tests }
\textbf{Input}: Fisher-transformed coefficient estimator $\widehat{\bm{z}}$ calculated from the $q$ dimensional data $\bX$ \\
\textbf{Sample}: $B$ bootstrap samples $\widehat{\mathcal{Z}^*_B} := \{ \widehat{\bm{z}}^*_1, \widehat{\bm{z}}^*_2, \cdots \widehat{\bm{z}}^*_B\}$ by re-sampling $n$ samples from $\bX$ with replacement for $B$ times.\\
\textbf{Calculate}: correlation matrix estimator $\widehat{\bm{\Omega_z}}$ for $\widehat{\bm{z}}$ from bootstrap samples $\widehat{\mathcal{Z}^*_B}$

\For{$j = 1,2,\cdots p$}
{
    \textbf{Construct:} $\widehat{\mathbf{G}}_j$ as the matrix whose columns form a basis for the null space of the $\widehat{z}_j$ corresponding column of $\widehat{\bm{\Omega_z}}$\\
    \textbf{Calculate:} Empirical Bayes estimator for $\widetilde{\bm{\eta}}, \widetilde{\bm{\Psi}}$ from indirect information $\widetilde{z}_j^\idpt = \widehat{\mathbf{G}}_j^\top \widehat{\bm{z}}$\\
    \textbf{Calculate:} Conditional mean and variance of $z_j | \widetilde{\bm{\eta}}, \widetilde{\bm{\Psi}}, \widetilde{\bm{z}}^{\idpt}$ as in Equation \ref{eq:FAB pos mean var}, denoted as $\widetilde{m}_j, \widetilde{v}_j$\\
    \textbf{Obtain:} FAB $p$-value $p^{\mathrm{FAB}}_j$ as defined in Equation \ref{eq:FAB corr p}.
}
\textbf{Return}: $\{p^{\mathrm{FAB}}_j\}_{j=1}^p$
\end{algorithm}
\vspace{0.25in}
}

\subsection{FAB Correlation Structure Testing} \label{sec: FAB corrStruct testing}

Section \ref{sec: FAB cor testing} builds the foundation of using FAB for correlation coefficient testing. To extend the methodology to testing every element in a large correlation matrix, computational cost needs to be reduced. This section proposes a simple divide-and-conquer method that reduces computation complexity while building a bridge that connects FAB and UMPU correlation tests.

We focus on the linking model in Equation \ref{eq:FAB corr linking model}. The linking model is constructed with $\bm{z}$ the
$q(q-1)/2$ dimensional
vectorized Fisher-transformed correlation matrix $\mathbf{Z} = F(\mathbf{P})$. We sub-divide  $\bm{z}$ into several groups of sub-vectors, with tests sharing information within but not across groups. This is expressed 
mathematically as
\begin{align}
    \widehat{\bm{z}} := 
    \begin{bmatrix}
    \widehat{\bm{z}}^{(1)} \\
    \vdots\\
    \widehat{\bm{z}}^{(m)}
    \end{bmatrix}
    &\sim 
    \cN \left( 
    \begin{bmatrix}
    \bm{z}^{(1)} \\
    \vdots\\
    \bm{z}^{(m)}
    \end{bmatrix}
    ,
    \begin{bmatrix}
    \bm{\Omega_z}^{(1)} & \hdots & ? \\
    \vdots              & \ddots & ? \\
    ?                   & ?      & \bm{\Omega_z}^{(m)}
    \end{bmatrix}
    \right)  \label{eq: breakdown sampling model}\\
    \bm{z} := 
    \begin{bmatrix}
    \bm{z}^{(1)} \\
    \vdots\\
    \bm{z}^{(m)}
    \end{bmatrix}
    &\sim 
    \cN \left( 
    \begin{bmatrix}
    \mathbf{W}^{(1)}& \hdots & \bm{0} \\
    \vdots          & \ddots & \bm{0} \\
    \bm{0}          & \bm{0} & \mathbf{W}^{(m)}
    \end{bmatrix}
    \begin{bmatrix}
    \bm{\eta}^{(1)} \\
    \vdots\\
    \bm{\eta}^{(m)}
    \end{bmatrix}
    ,
    \begin{bmatrix}
    \bm{\Psi}^{(1)} & \hdots & \bm{0} \\
    \vdots          & \ddots & \bm{0} \\
    \bm{0}          & \bm{0} & \bm{\Psi}^{(m)}
    \end{bmatrix}
    \right) \label{eq: breakdown linking model}
\end{align}

The question marks in Equation \ref{eq: breakdown sampling model} represent the cross-covariance matrix between $\widehat{\bm{z}}^{(k_1)}$ and $\widehat{\bm{z}}^{(k_2)}$ when $k_1 \neq k_2$. The block-diagonal design of the linking model in Equation \ref{eq: breakdown linking model} prohibits sharing of information across groups while still being a valid linking model. The FAB $p$-value calculation, which involves large matrix manipulation, can be reduced to the following equivalent calculation  involving smaller-scale matrix calculations:
\begin{align}
    \widehat{\bm{z}}^{(k)} &\sim \cN\left( \bm{z}^{(k)}, \bm{\Omega_z}^{(k)}\right)\\
    \bm{z}^{(k)} &\sim \cN \left(\mathbf{W}^{(k)} \bm{\eta}^{(k)},  \bm{\Psi}^{(k)}\right)
\end{align}
for all $k \in \{1:m\}$. This simplification reduces dimensions of matrix manipulation for each test and therefore achieves better scalability. Compared with the full FAB correlation test without grouping and the block-diagonal structure of $\mathbf{W}$, the approach limits sharing of information to within the same test groups. This restriction does not necessarily harm power, as the grouping-enabled more granular linking model can potentially provide better flexibility and fit, which are key factors determining the power. When $m$, the total number of test groups, equals 1, this reduces to the original FAB correlation test, whereas when $m = p$, there is equivalence to the UMPU correlation test. Thus, this divide-and-conquer approach establishes a granularity spectrum from the vanilla UMPU to the full FAB correlation test without changing the number of individual hypotheses.

Each one of the $p$ tests is assigned into one group vector $\widehat{\bm{z}}^{(k)}$ equipped with a model parameterized by $\mathbf{W}^{(k)}$ and $\bm{\eta}^{(k)}$. The design of $\mathbf{W}^{(k)}$ and the grouping of tests are arbitrary. For the design of $\mathbf{W}^{(k)}$, it can be as simple as a $p_{(k)} \times 1$ dimensional matrix of ones, if an external dataset is not available, or a matrix with columns of any power of the external data $\widehat{\bm{z}}^{(k)}_{\text{ext}}$, if external data are available. The prior represents a single mean model 
\[
\forall j \in \{1, \cdots, p_{(k)}\}, \quad z_j^{(k)} \sim \cN(\eta^{(k)}, \psi_{(k)}^2)
\]
and the latter simulates a multivariate linear regression between the true parameter $\bm{z}^{(k)}$ and the parameter's external observations $\widehat{\bm{z}}^{(k)}_{\text{ext}}$
\[
\bm{z}^{(k)} \sim \cN \left(  \sum_{d=0}^D ({\widehat{\bm{z}}^{(k)}}_{\text{ext}})^d \cdot \eta_d , \bm{\Psi}^{(k)} \right)
\]

Hierarchical structure can also be added to form Bayesian shrinkage on $\bm{\eta}$. The design is highly flexible. For grouping, a convenient yet useful approach would be to assign Fisher-transformed correlation coefficients $\widehat{z}_j$ that are likely to share similar values into the same group, based exclusively on estimation from external knowledge. For example, consider 9 test statistics generated from the testing dataset $\{\widehat{z}_1, \widehat{z}_2, \widehat{z}_3, \widehat{z}_4, \widehat{z}_5, \widehat{z}_6, \widehat{z}_7, \widehat{z}_8, \widehat{z}_9\}$ and their corresponding statistics generated from the external auxiliary dataset $\{\widehat{z}^{\text{ext}}_j\}_{j=1:9}$. We can rank $\{\widehat{z}^{\text{ext}}_j\}_{j=1:9}$ by their magnitude. Suppose the resulting order is $\widehat{z}^{\text{ext}}_5> \widehat{z}^{\text{ext}}_4>\widehat{z}^{\text{ext}}_8> \widehat{z}^{\text{ext}}_1> \widehat{z}^{\text{ext}}_3> \widehat{z}^{\text{ext}}_9> \widehat{z}^{\text{ext}}_6> \widehat{z}^{\text{ext}}_7> \widehat{z}^{\text{ext}}_2$. Then, suppose $m=3$, the final group assignment is $\{\widehat{z}_5, \widehat{z}_4, \widehat{z}_8\}_{(1)}$, $\{\widehat{z}_1, \widehat{z}_3, \widehat{z}_9\}_{(2)}$, and $\{\widehat{z}_6, \widehat{z}_7, \widehat{z}_2\}_{(3)}$. We illustrate such an assignment mechanism in Figure \ref{fig: DnQ}. This is similar to the idea of developing local regression models between $\bm{z}$ and $\widehat{\bm{z}}^{\text{ext}}$, which is helpful to address the global non-linearity between $\bm{z}$ and $\widehat{\bm{z}}^{\text{ext}}$ by offering local regression between $\widehat{\bm{z}}^{(k)}$ and $\widehat{\bm{z}}^{(k)}_{\text{ext}}$ for all $k \in \{1:m\}$. 

\begin{figure}[H]
    \centering
    \includegraphics[width = 0.6\linewidth]{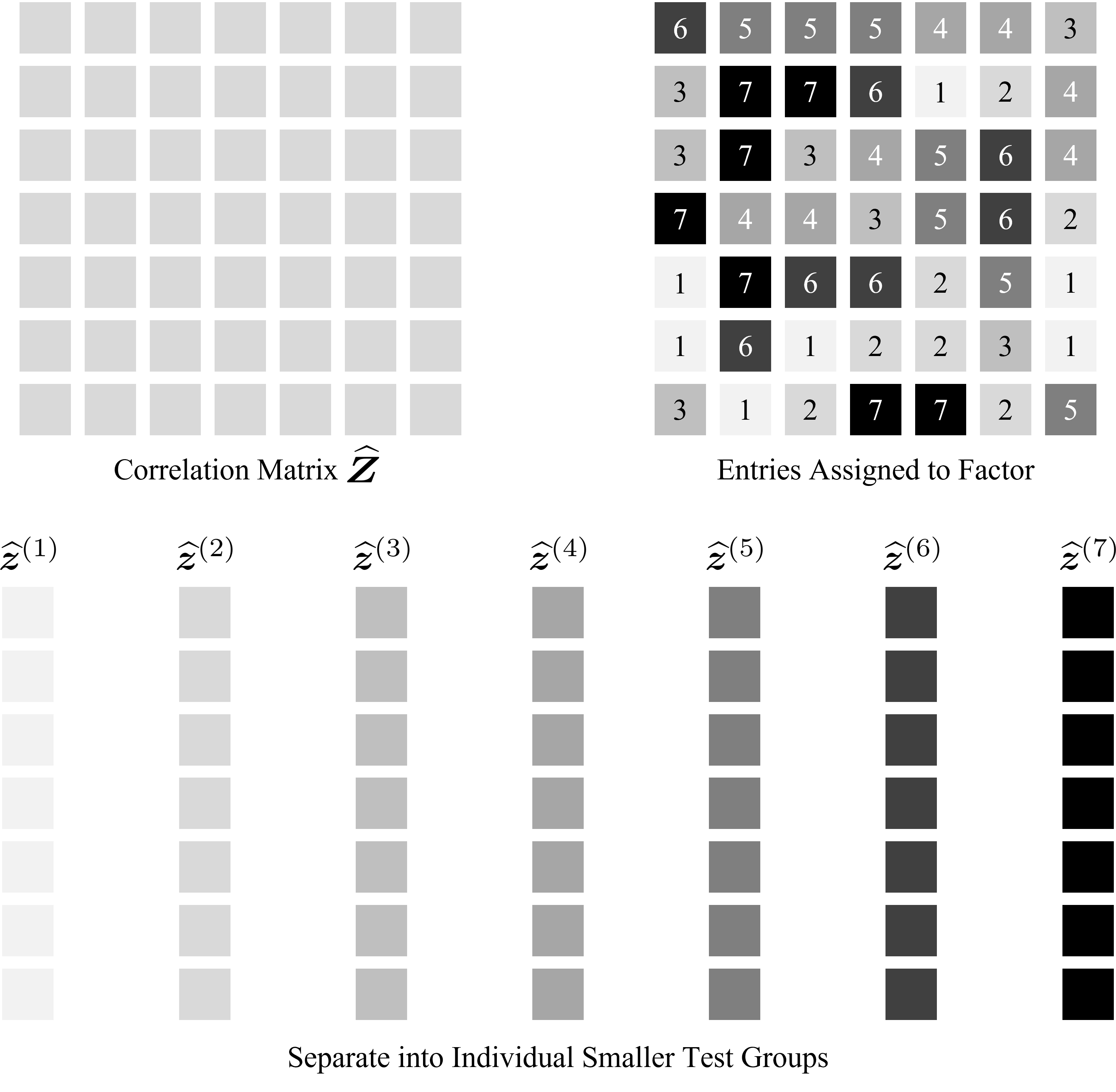}
    \caption{Divide and Conquer Group Assignment}
    \label{fig: DnQ}
\end{figure}
The most convenient way of constructing the linking model is to design $\mathbf{W}^{(k)}$ as a $p_{(k)} \times 1$ dimensional matrix of ones. This design indicates that each scalar correlation coefficient from $\bm{z}^{(k)}$ originates from the same component and shares the same scalar mean $\eta^{(k)}$. Mathematically, this can be expressed as:
\begin{align}
    \bm{z} &\sim \cN \left( \mathbb{I}\bm{\eta}, \bm{\psi}^2 \mathbf{I}_p\right)
    \label{eq: Useful Linking Model}
    \\
    \mathbb{I} = 
    \begin{bmatrix}
    \bm{1}^{(1)}_{p_{(1)}\times 1}    & \hdots & \bm{0} \\
    \vdots          & \ddots & \bm{0} \\
    \bm{0}          & \bm{0} & \bm{1}^{(m)}_{p_{(m)}\times 1}
    \end{bmatrix} 
    \quad
    \bm{\eta} &= 
    \begin{bmatrix}
    \eta^{(1)} \\
    \vdots\\
    \eta^{(m)}
    \end{bmatrix}
    \quad
    \bm{\psi}^2 \mathbf{I}_p
    = 
    \begin{bmatrix}
    \psi^2_{(1)}\mathbf{I}_{p_{(1)}}    & \hdots & \bm{0} \\
    \vdots                          & \ddots & \bm{0} \\
    \bm{0}                          & \bm{0} & \psi^2_{(m)}\mathbf{I}_{p_{(m)}}
    \end{bmatrix} \nonumber  
\end{align}

The design in Equation \ref{eq: Useful Linking Model} will be used in Section \ref{sec: simulation} for simulation. The way to perform the FAB correlation test in one group is identical as in Algorithm \ref{alg: FAB correlation test}. To perform the FAB test for all correlation groups, simply apply algorithm \ref{sec: simulation} $m$ times iteratively for each group.

\section{Simulation} \label{sec: simulation}

In this section, we apply simulation studies to demonstrate the effectiveness of the FAB correlation structure test. Section \ref{sec: purely external simulation} evaluates the methods of Section \ref{sec: independent case}, where we assume a similar external dataset is available to provide indirect information for testing. Section \ref{sec: bootstrap simulation} evaluates the methods of Section \ref{sec: nonindependent case}, where we only assume the availability of prior information on grouping.

The simulated data generating process is identical for both subsections. We generated positive semi-definite covariance matrix $\bm{\Sigma}$ by letting
    $\bm{\Sigma} = \mathbf{U}^\top\mathbf{U} + \mathbf{I}_q,$ where 
$\mathbf{U}$ is a randomly generated $l \times q$ dimensional matrix with $l< q$, $l = 50$ and $q = 100$. Each row of $\mathbf{U}$ is randomly masked with a proportion of zeros. The proportion is modulated until $\bm{\Sigma}$ has a similar number of zero and non-zero entries. This randomly generated covariance matrix $\bm{\Sigma}$, accompanied with a random mean vector, is used to generate the observable data $\mathbf{X}$. 

\subsection{\texorpdfstring{$\widehat{\bm{z}}_j^{\prime} = \widehat{\bm{z}}_{-j}^{\text{ext}}$}{Lg}: Sampling Model Using External Data} \label{sec: purely external simulation}


We first explore a specific case where $n = 100$, $q = 100$, and a similar external dataset also has $n_{\text{ext}} = 100$. This external dataset $\mathbf{X}_{\text{ext}}$ is generated from the same $\bm{\Sigma}$ as the testing dataset but with an additional random noise. To speed up computation, we applied the divide-and-conquer approach of Section \ref{sec: FAB corrStruct testing}. We manually set each group to contain 50 hypotheses, thus resulting into $m=99$ hypothesis groups. The group assignment is based on their external data magnitude order as described at the end of Section \ref{sec: FAB corrStruct testing}. Applying the linking model from Equation \ref{eq: Useful Linking Model}, the full model for the $j^{th}$ test within the $k^{th}$ test group is
\begin{align*}
    \begin{bmatrix}
        \widehat{z}_j \\
        (\widehat{\bm{z}}_{-j}^{\text{ext}})^{(k)}
    \end{bmatrix}
    &\sim \cN \left( 
    \begin{bmatrix}
        z_j\\
        ( \bm{z}_{-j}^{\text{ext}} )^{(k)}
    \end{bmatrix}
    , \frac{1}{n-3} 
    \begin{bmatrix}
        1 & \bm{0}\\
        \bm{0} & {\bm{\Omega}}_{(\bm{z}_{-j}^{\text{ext}})^{(k)}}
    \end{bmatrix}
    \right) \\
    \bm{z} &\sim \cN(\bm{1}_{p_{(k)}\times 1}\eta^{(k)}, \psi^2_{(k)} \mathbf{I}_{p_{(k)}}   ) \label{eq:FAB corr linking model}
\end{align*}

The simulation generated 2231 true null hypotheses and 2719 alternative hypotheses. The same hypothesis appears twice in the upper and lower triangle part of the covariance matrix $\bm{\Sigma}$. In this case, we count them as one hypothesis. It appears that the borrowing of information in our FAB approach leads to some UMPU $p$-values being pushed down below the pre-specified threshold for significance. This phenomenon is illustrated in Figure \ref{fig: Idpt UMPU FAB scatter}.  



\begin{figure}[H]
    \begin{center}
        \begin{subfigure}[b]{0.45\textwidth}
                \centering
                \includegraphics[width=\linewidth]{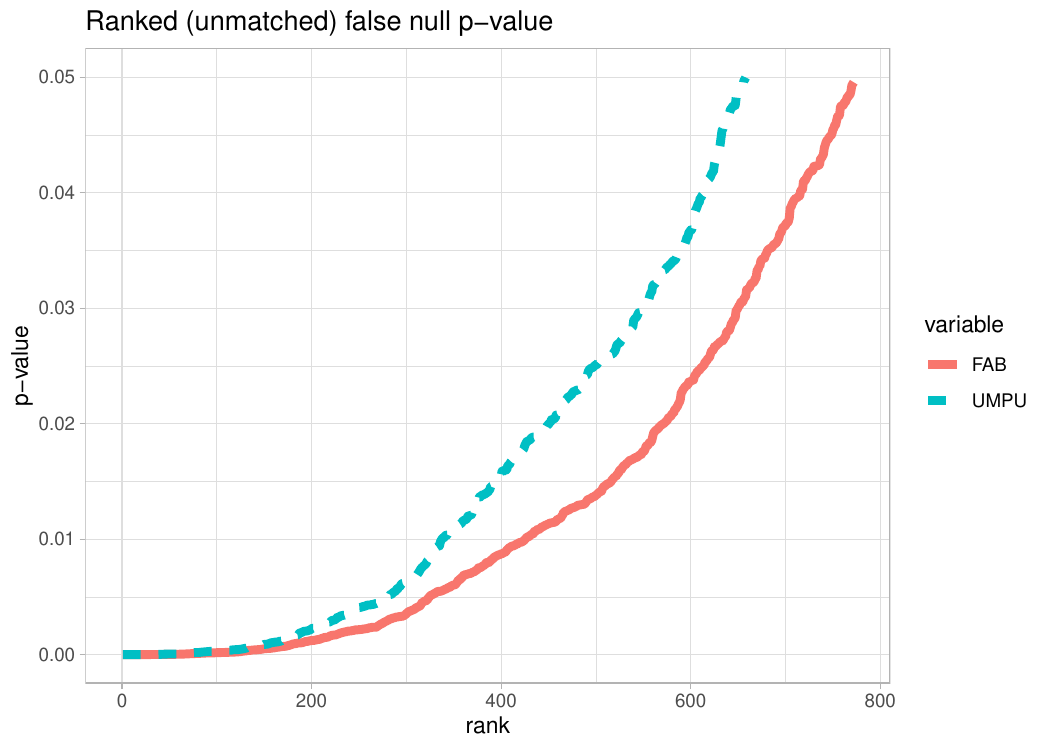}
                \caption{Ranked UMPU and FAB $p$-value}
                \label{fig: Idpt UMPU FAB power line}
        \end{subfigure}
        \begin{subfigure}[b]{0.45\textwidth}
                \centering
                \includegraphics[width=\linewidth]{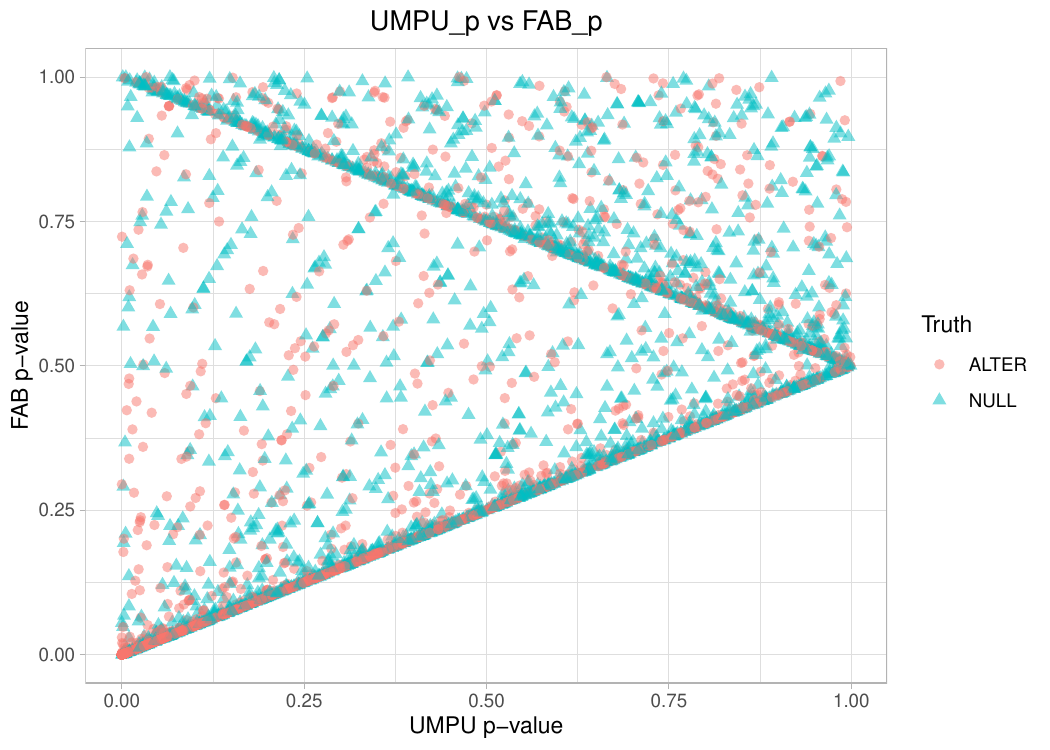}
                \caption{UMPU vs FAB $p$-value}
                \label{fig: Idpt UMPU FAB scatter}
        \end{subfigure}
    \end{center}
    \caption{UMPU vs FAB $p$-values when $\widehat{\bm{z}}_j^{\prime} = \widehat{\bm{z}}_{-j}^{\text{ext}}$. Figure \ref{fig: Idpt UMPU FAB power line} indicates ranked UMPU and FAB $p$-values. FAB has smaller $p$-values. Figure \ref{fig: Idpt UMPU FAB scatter} plots UMPU versus FAB $p$-values.}
\end{figure}

Figure \ref{fig: Idpt UMPU FAB NULL} shows the distribution of FAB $p$-values under the true null hypothesis, compared to that of UMPU, with both being uniformly distributed. Hence, the Type I error is controlled. Meanwhile, Figure \ref{fig: Idpt UMPU FAB ALTER} demonstrates the distribution of the FAB $p$-values under the alternative hypothesis is generally tilted towards the left, compared to that of UMPU, indicating a larger power and agreeing with Figure \ref{fig: Idpt UMPU FAB power line}.

\begin{figure}[H]
    \begin{center}
        \begin{subfigure}[b]{0.49\textwidth}
            \centering
            \includegraphics[width = \linewidth]{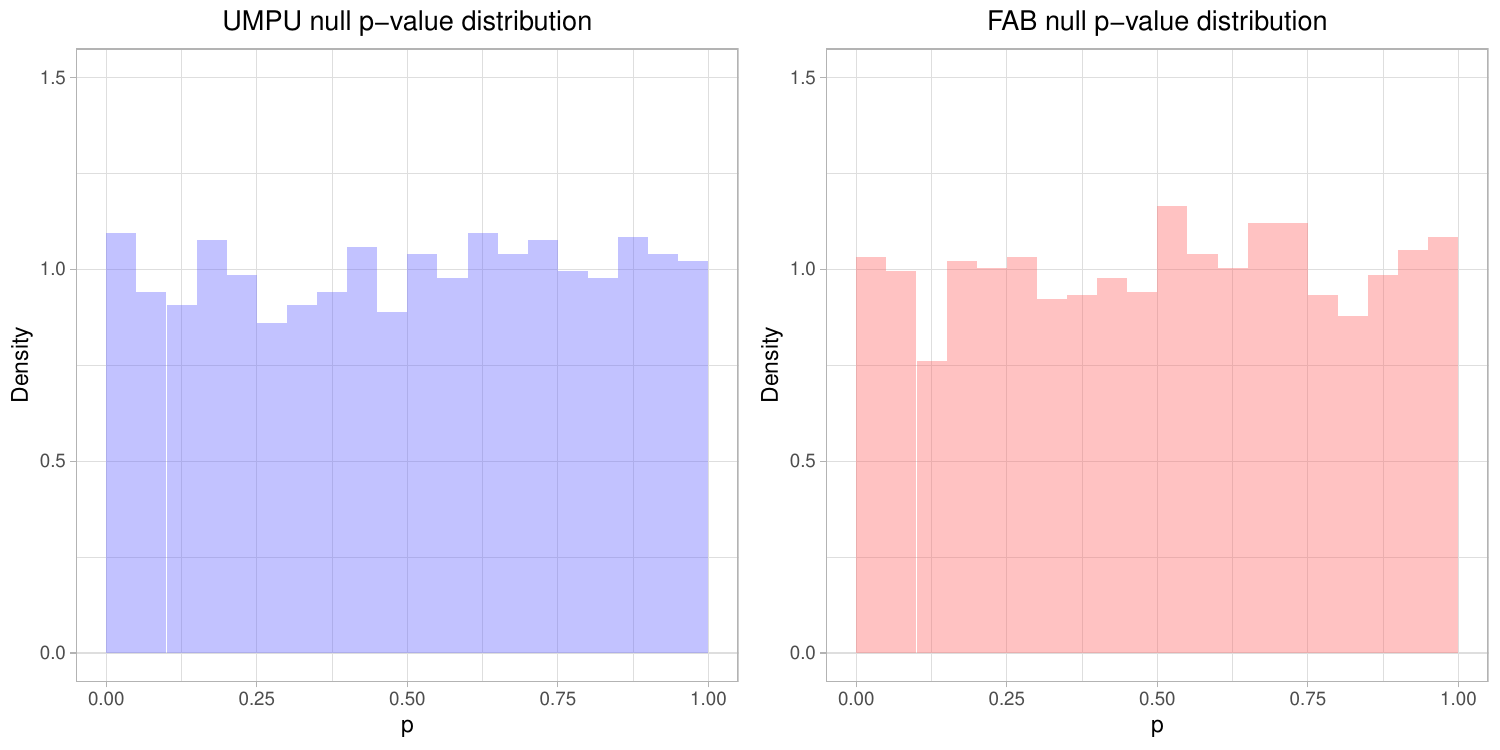}
            \caption{$p$-value distribution under true null}
            \label{fig: Idpt UMPU FAB NULL}
        \end{subfigure}
        \begin{subfigure}[b]{0.49\textwidth}
            \centering
            \includegraphics[width = \linewidth]{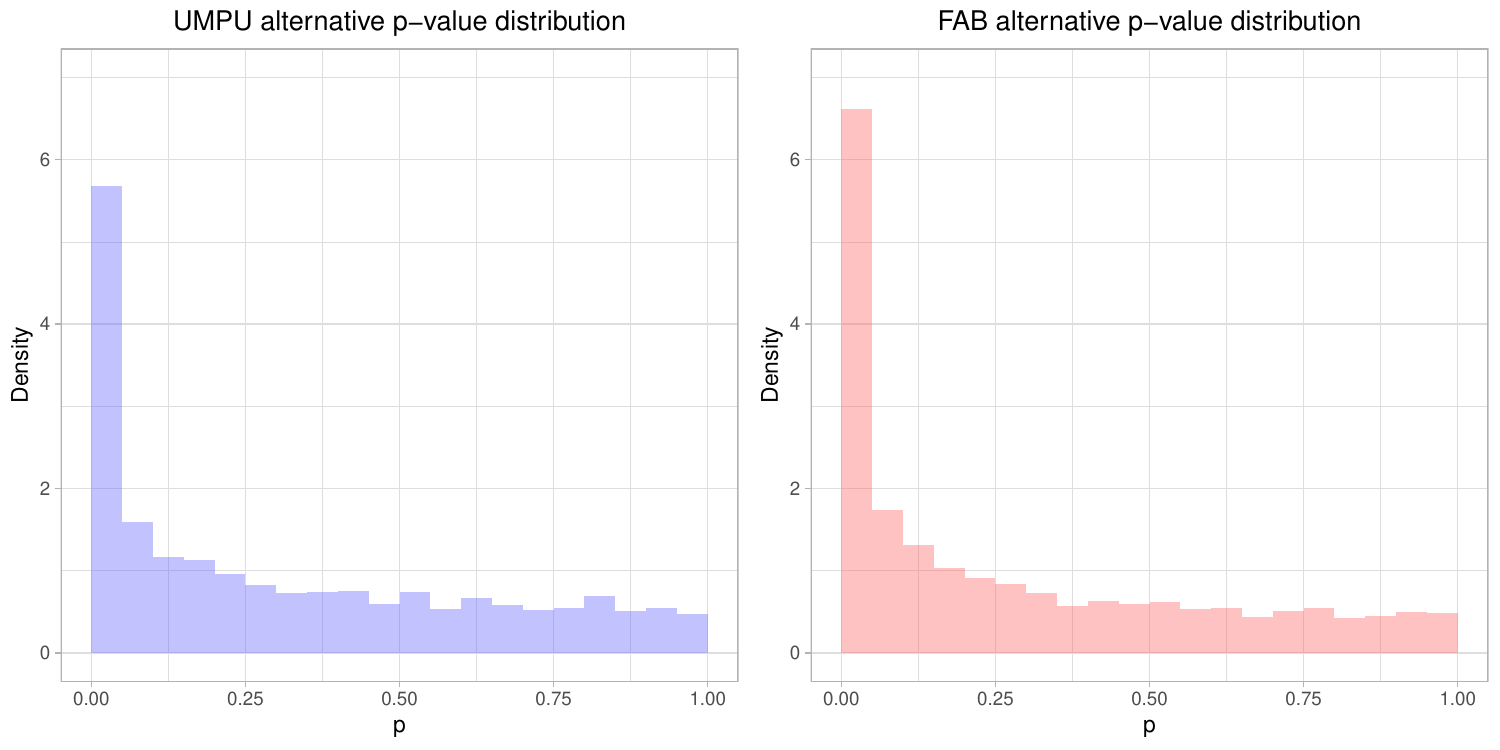}
            \caption{$p$-value distribution under alternative}
            \label{fig: Idpt UMPU FAB ALTER}
        \end{subfigure}
    \end{center}
    \caption{UMPU and FAB $p$-values distribution, when $\widehat{\bm{z}}_j^{\prime}  = \widehat{\bm{z}}_{-j}^{\text{ext}}$. Under null (\ref{fig: Idpt UMPU FAB NULL}), UMPU and FAB $p$-values are uniform. Under alternative (\ref{fig: Idpt UMPU FAB ALTER}) FAB leads to more small $p$-values.} \label{fig: Idpt UMPU FAB DIST}
\end{figure}

We generated datasets with all nine combinations of $(n, q)$ within $n \in \{50, 100, 200\}$ and $q \in \{50, 100, 200\}$.  Within each configuration, we run both FAB and UMPU correlation structure tests on $10$ different randomly generated datasets.  
The results are in Table \ref{tbl:idpt massive}. 
FAB maintains Type I error control at $p=0.05$. Under all $(n, q)$ configurations, we notice a boost in power ranging from $12.52\%-23.80\%$ comparing to UMPU. The power increment is inversely related to sample size; as sample size increases, more tests will be rejected, leaving less room for improvement for FAB.
Furthermore, the FAB offset component $b_j$ in its $p$-value expression as in Equation \ref{eq:FAB corr p} also shrinks in value when $n$ increases.

\begin{table}[H]
\centering
\resizebox{\textwidth}{!}{
\renewcommand{\arraystretch}{0.7}
\begin{tabular}{c|c|c|cllcllcll}
\Xhline{2\arrayrulewidth}
\multicolumn{3}{l|}{\multirow{3}{*}{}} &  & \multicolumn{8}{c}{$n$}                                                              \\ 
\cline{5-12}
\multicolumn{3}{l|}{}                                               &   & \multicolumn{2}{c}{50}  &  &\multicolumn{2}{c}{100} &  &\multicolumn{2}{c}{200}      \\ 
\cline{5-6}\cline{8-9}\cline{11-12}
\multicolumn{3}{l|}{}                                               &   & Null    & Alternative       &  & Null    & Alternative      &  & Null   & Alternative        \\ 
\hline
\multirow{6}{*}{$q$}  & \multirow{2}{*}{50}  & Not Reject           &   & 0.9502  & 0.8146            &  & 0.9510  & 0.7278           &  & 0.9541 & 0.5992             \\
                      &                      & Reject               &   & 0.0498  & 0.1854 (+23.77\%) &  & 0.0490  & 0.2822 (+17.39\%)&  & 0.0459 & 0.4008 (+14.06\%)  \\
\cline{3-12}
                      & \multirow{2}{*}{100} & Not Reject           &   & 0.9483  & 0.8090            &  & 0.9509  & 0.7130           &  & 0.9525 & 0.5948             \\
                      &                      & Reject               &   & 0.0517  & 0.1910 (+21.58\%) &  & 0.0490  & 0.2870 (+20.28\%)&  & 0.0475 & 0.4052 (+12.52\%)  \\
\cline{3-12}
                      & \multirow{2}{*}{200} & Not Reject           &   & 0.9488  & 0.8060            &  & 0.9499  & 0.6968           &  & 0.9486 & 0.5787             \\
                      &                      & Reject               &   & 0.0512  & 0.1940 (+23.80\%) &  & 0.0501  & 0.3032 (+18.90\%)&  & 0.0514 & 0.4213 (+13.19\%)  \\
\Xhline{2\arrayrulewidth}
\end{tabular}}
\caption{Confusion Matrix of FAB Correlation Structure Tests when $\widehat{\bm{z}}_j^{\prime}  = \widehat{\bm{z}}_{-j}^{\text{ext}}$ is external}
\label{tbl:idpt massive}
\end{table}

\subsection{ \texorpdfstring{$\widehat{\bm{z}}_j^{\prime} = \widehat{\bm{z}}_{-j}$}{Lg}: Sampling Model Using Internal Data  } \label{sec: bootstrap simulation}


All FAB $p$-values are calculated via Algorithm \ref{alg: FAB correlation test}. We set the group size to $5$ in our analyses in this section, with a 
total number of groups of $m=990$. There are 2220 true null hypotheses and 2730 alternative hypotheses. In Figure \ref{fig: boot UMPU FAB power line} we observe that FAB rejects more hypothesis.




\begin{figure}[H]
    \begin{center}
        \begin{subfigure}[b]{0.45\textwidth}
                \centering
                \includegraphics[width=\linewidth]{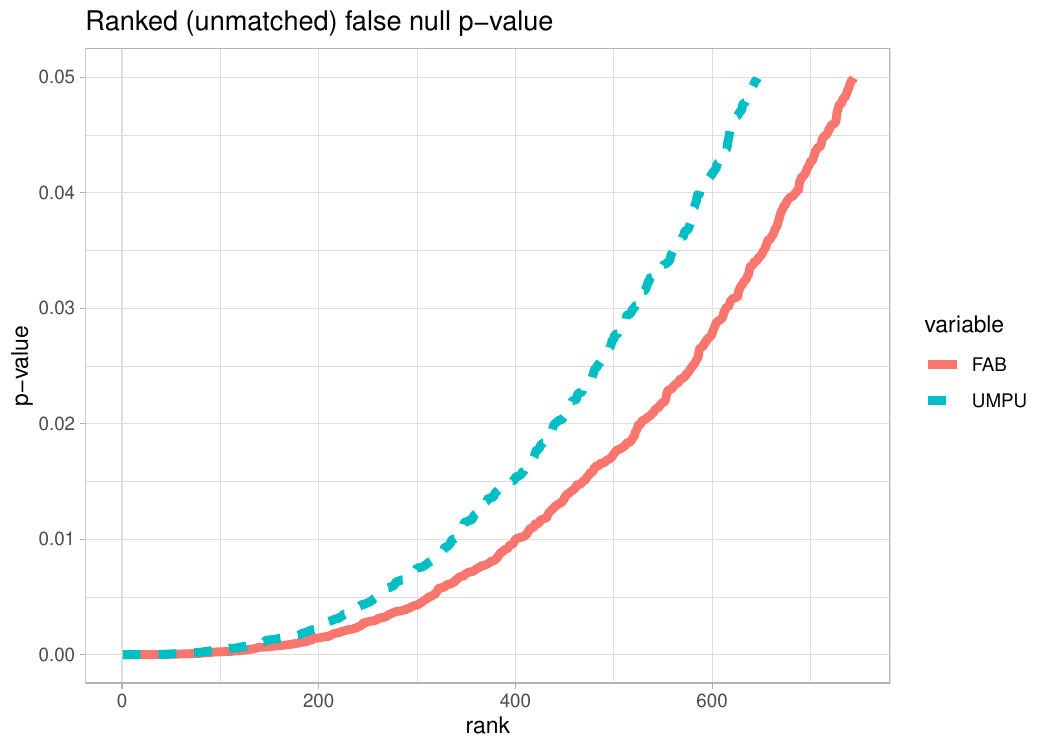}
                \caption{Ranked UMPU and FAB $p$-value}
                \label{fig: boot UMPU FAB power line}
        \end{subfigure}
        \begin{subfigure}[b]{0.45\textwidth}
                \centering
                \includegraphics[width=\linewidth]{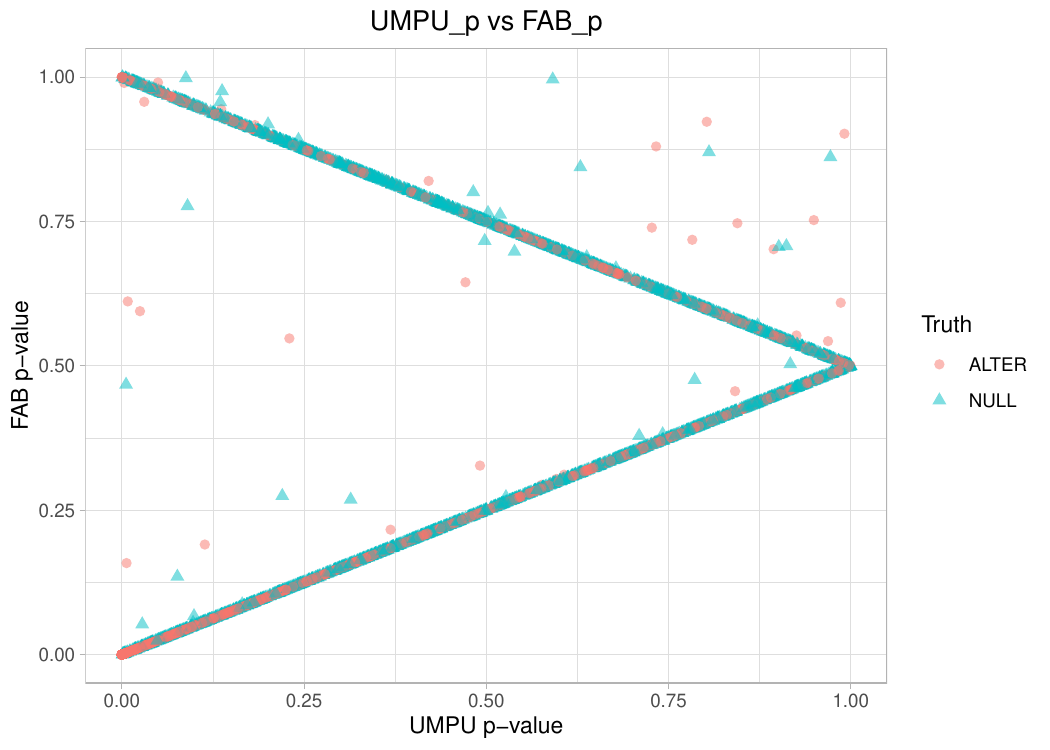}
                \caption{UMPU vs FAB $p$-values }
                \label{fig: boot UMPU FAB scatter}
        \end{subfigure}
    \end{center}
    \caption{UMPU vs FAB $p$-values when $\widehat{\bm{z}}_j^{\prime} =\widehat{\bm{z}}_{-j}$. Figure \ref{fig: Idpt UMPU FAB power line} indicates ranked UMPU and FAB $p$-values. FAB has smaller $p$-values. Figure \ref{fig: Idpt UMPU FAB scatter} indicates UMPU versus FAB $p$-values}
\end{figure}

Figure \ref{fig: boot UMPU FAB NULL} demonstrates the distribution of FAB $p$-values under the true null hypothesis, compared to that of UMPU, with both approximately uniformly distributed. Figure \ref{fig: boot UMPU FAB ALTER} shows that under the alternative hypothesis, FAB $p$-values using bootstrap are also generally smaller. The ranking of $p$-values for FAB and UMPU is demonstrated in Figure \ref{fig: boot UMPU FAB power line}.

\begin{figure}[H]
    \begin{center}
        \begin{subfigure}[b]{0.49\textwidth}
            \centering
            \includegraphics[width = \linewidth]{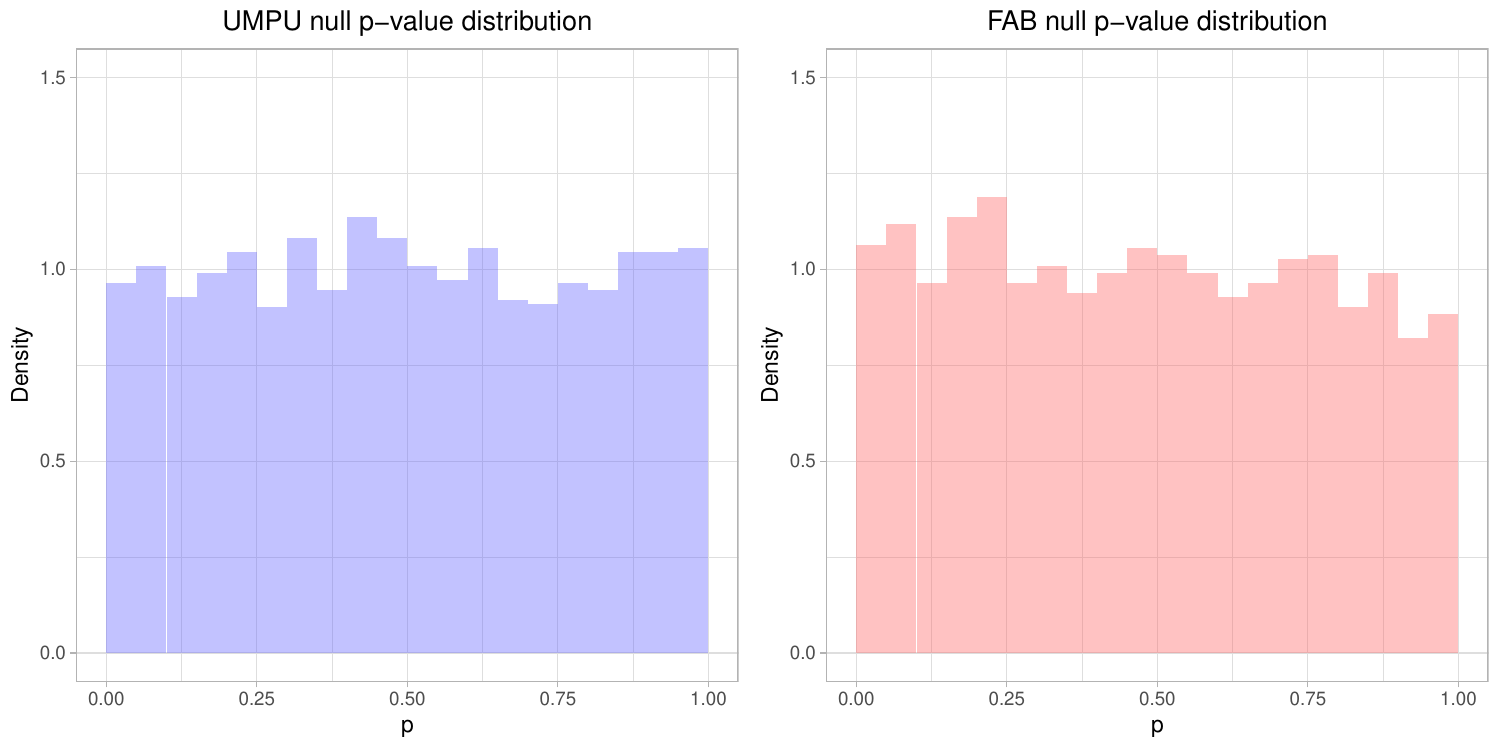}
            \caption{$p$-value distribution under true null}
            \label{fig: boot UMPU FAB NULL}
        \end{subfigure}
        \begin{subfigure}[b]{0.49\textwidth}
            \centering
            \includegraphics[width = \linewidth]{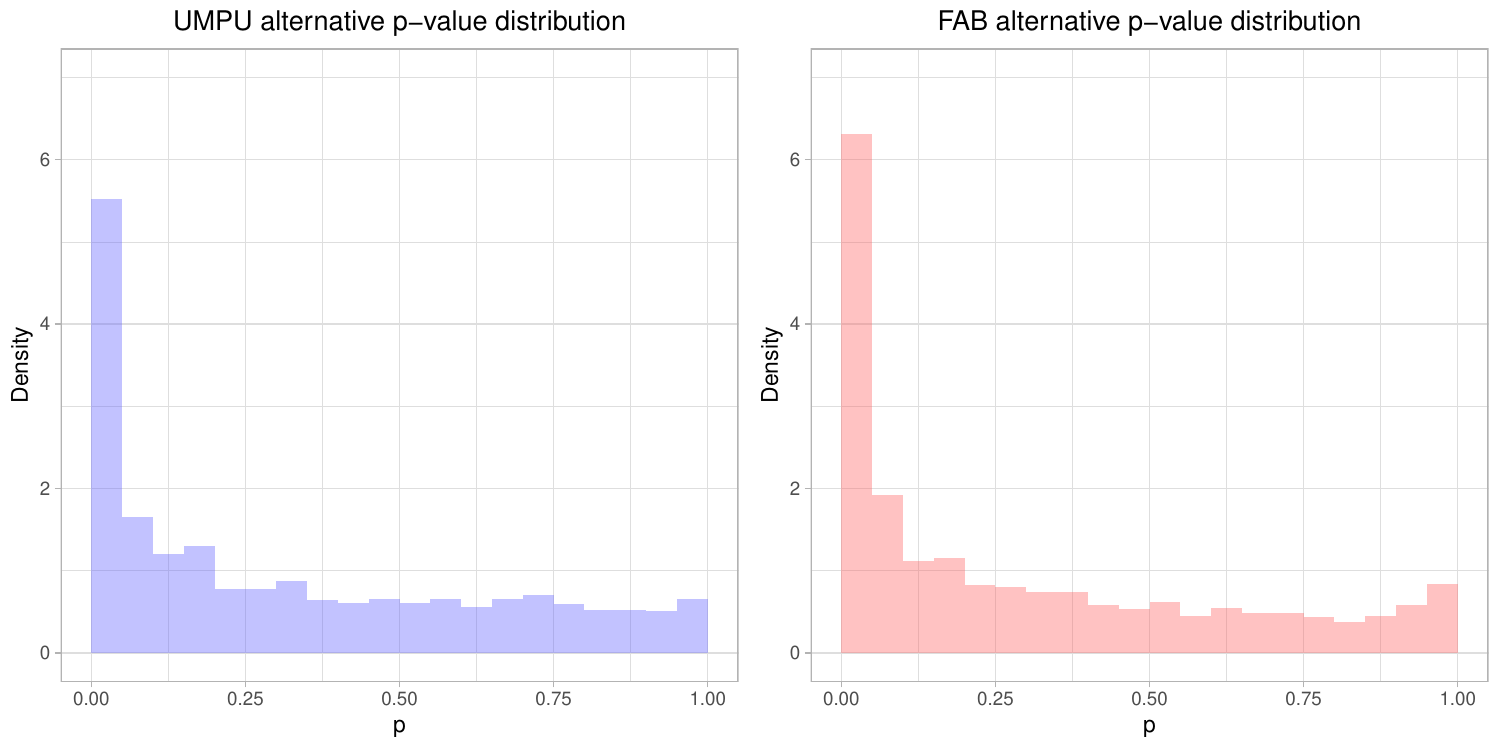}
            \caption{$p$-value distribution under hypothesis}
            \label{fig: boot UMPU FAB ALTER}
        \end{subfigure}
    \end{center}
    \caption{UMPU and FAB $p$-values distribution, when $\widehat{\bm{z}}_j^{\prime} =\widehat{\bm{z}}_{-j}$. Under null (\ref{fig: boot UMPU FAB NULL}), UMPU and FAB $p$-values are uniform. Under alternative (\ref{fig: boot UMPU FAB ALTER}) FAB leads to more small $p$-values.} \label{fig: boot UMPU FAB DIST}  
\end{figure}


\begin{table}
\centering
\resizebox{\textwidth}{!}{
\renewcommand{\arraystretch}{0.7}
\begin{tabular}{c|c|c|cllcllcll}
\Xhline{2\arrayrulewidth}
\multicolumn{3}{l|}{\multirow{3}{*}{}} &  & \multicolumn{8}{c}{$n$}                                                              \\ 
\cline{5-12}
\multicolumn{3}{l|}{}                                               &   & \multicolumn{2}{c}{50}  &  &\multicolumn{2}{c}{100} &  &\multicolumn{2}{c}{200}      \\ 
\cline{5-6}\cline{8-9}\cline{11-12}
\multicolumn{3}{l|}{}                                               &   & Null    & Alternative       &  & Null    & Alternative      &  & Null   & Alternative        \\ 
\hline
\multirow{6}{*}{$q$}  & \multirow{2}{*}{50}  & Not Reject           &  & 0.9386  & 0.7989            &  & 0.9451  & 0.7318               &  & 0.9412 & 0.6067             \\
                      &                      & Reject               &  & 0.0614  & 0.2011 (+18.64\%) &  & 0.0549  & 0.2682 (+11.61\%)    &  & 0.0588 & 0.3933 (+10.88\%)  \\
\cline{3-12}
                      & \multirow{2}{*}{100} & Not Reject           &  & 0.9475  & 0.8197            &  & 0.9464  & 0.7257               &  & 0.9494 & 0.5901             \\
                      &                      & Reject               &  & 0.0525  & 0.1803 (+16.55\%) &  & 0.0536  & 0.2743 (+14.15\%)    &  & 0.0506 & 0.4099 (+10.01\%)  \\
\cline{3-12}
                      & \multirow{2}{*}{200} & Not Reject           &  & 0.9442  & 0.8099            &  & 0.9485  & 0.7167               &  & 0.9483 & 0.5963             \\
                      &                      & Reject               &  & 0.0558  & 0.1901 (+16.84\%) &  & 0.0515  & 0.2833 (+14.19\%)    &  & 0.0517 & 0.4037 (+10.21\%)  \\
\Xhline{2\arrayrulewidth}
\end{tabular}}
\caption{Confusion Matrix of FAB vs UMPU Correlation Structure Tests using Bootstrap}
\label{tbl: boot massive}
\end{table}


The bootstrap method for the FAB test yields $10.01\%-18.64\%$ more power comparing to UMPU. Detailed results are shown in Table \ref{tbl: boot massive}. 
In Figure \ref{fig: boot consistency over NB} we verify the relationship between $n$ and $B$ and the $p$-value distribution. Under all circumstances, increasing $B$ and $n$ improves the Type I error control. 

\begin{figure}[H]
\centering
    \includegraphics[width = 0.6\linewidth]{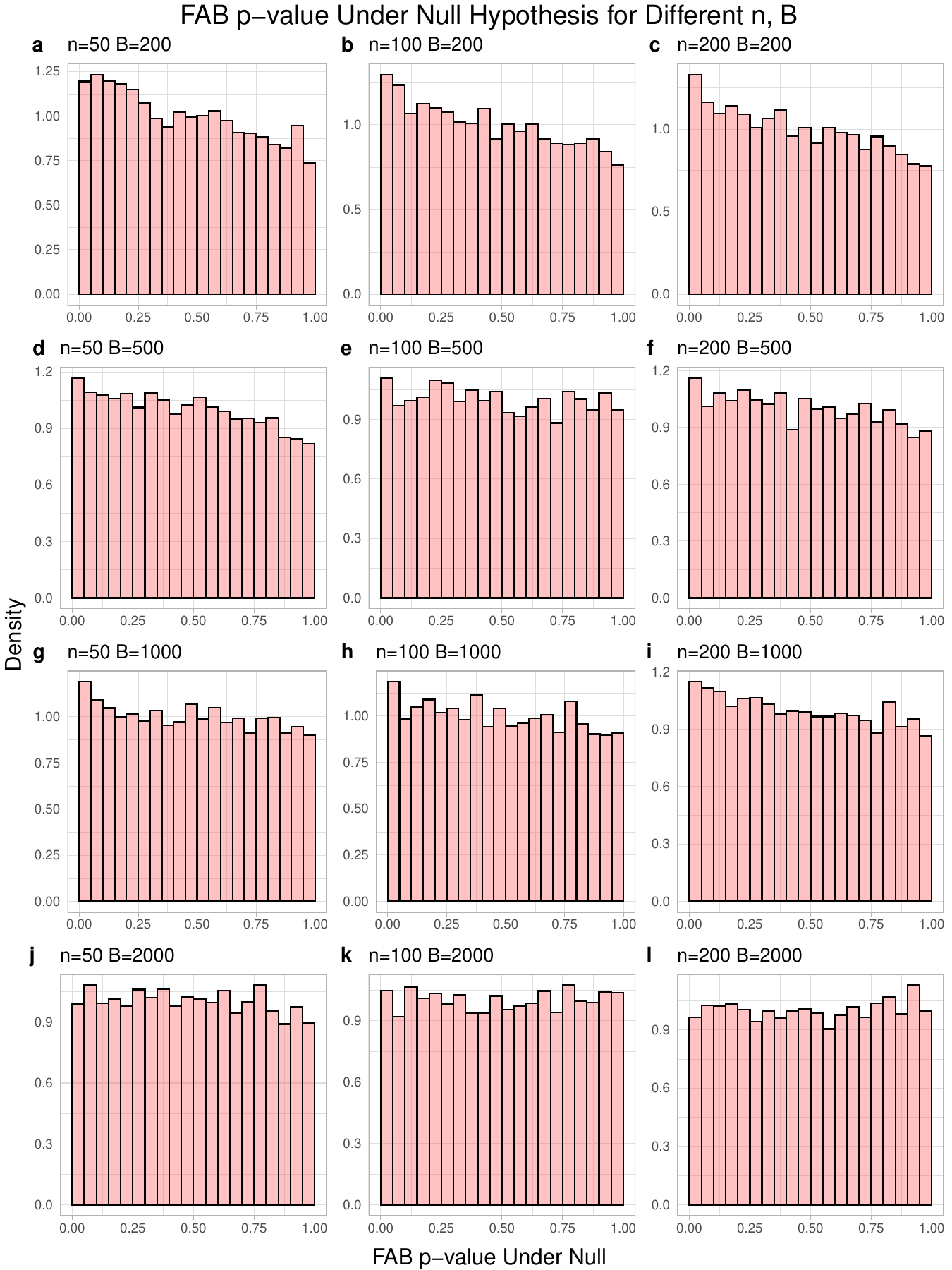}
    \caption{FAB $p$-value under null for different $(n, B)$ configuration}
    \label{fig: boot consistency over NB}
\end{figure}

\section{Application}

This section applies our FAB correlation structure tests to datasets obtained from the Cancer Dependency Map portal \citep{broad_depmap_depmap_2019}. The Cancer Dependency Map contains an extensive collection of genomics data, including measurements of gene expression, RNAi and CRISPR dependency, and drug sensitivity gathered from over 1,000 cancer cell types. The biological pathways active in these cancer cells can, in part, be described by gene-to-gene interactions, which can be inferred from the correlation structure in each genomics dataset. For instance, the correlation between genes in an expression dataset could indicate the presence of positive or negative regulation between genes. Correlation between genes in a CRISPR dependency dataset could indicate that two genes share an identical function because their deletion has similar adverse effects across cancer cell types. Because gene-to-gene interactions may be similar across cancer tissue types or specific technologies, rich genomics datasets such as those found in the Cancer Dependency Map present opportunities for sharing information using FAB structure tests.

Our first application of the FAB structure test uses a correlation matrix derived from RNAi dependency data \citep{mcfarland2018improved} to test correlation coefficients derived from CRISPR dependency data \citep{meyers2017computational}. Both datasets in our application contain 280 genes with dependency scores measured on 525 cancer cell types. Although RNAi and CRISPR are different technologies, they are both designed to measure genetic loss-of-function, so one might expect that the correlation structures uncovered by these technologies are similar, though not identical. Our second application uses a correlation matrix derived from 67 Breast cancer cell types to test the correlation structure derived from 206 Lung cancer cell types. These datasets each have 277 measured genes. Again, while these cancers and their regulatory networks are biologically distinct, one would expect some similarities between their correlation structures.

As the ground truth for the correlation structure is unknown for real data, we use as an imperfect surrogate published results for similar datasets \citep{subramanian2005gene, rouillard2016harmonizome}. We take gene pairs that are declared as significantly correlated in these findings as ground truth. Gene pairs that are not declared as correlated are not necessarily uncorrelated but can be due to a lack of power in rejecting the null hypothesis.  We use a list of 41,327 previously reported gene pairs as ground truth for those correlations that are non-zero. These pairs were curated from microarray experiments and downloaded from the Molecular Signatures Database.

Only genes appearing in the ground truth dataset will be considered in our analysis.  Focusing on these genes, and removing observations containing missing data and no variance genes, 
the RNAi and CRISPR datasets are reduced to an identical set of 69 measured genes, with the sample sizes equal to 107 and 519, respectively.
Out of the $60\times 70/2 = 2415$ pairs of genes, 165 pairs were declared as correlated in previous published results based on different datasets.
We report the empirical power as the number of rejected pairs out of the 165 divided by 165.
 Figure \ref{fig: application case 1 RNA}a plots the testing dataset's Fisher-transformed correlation coefficients $\widehat{\mathbf{Z}}$ against that of the auxiliary dataset ($\widehat{\mathbf{Z}}_\text{ext}$). We performed a simple linear regression between entries of $\widehat{\mathbf{Z}}$ and $\widehat{\mathbf{Z}}_\text{ext}$; a line through the origin provided an excellent fit, and hence was used as the linking model as in Equation \ref{eq: breakdown linking model}. Hence, $\mathbf{W}^{(m)}=\left[ \hat{z}^{\text{ext}}_{1},\ldots, {\hat{z}}^{\text{ext}}_{p_m}  \right]^\top$. 

We limit each test group size to 120 tests.  The bootstrap FAB correlation test algorithm \ref{alg: FAB correlation test} is utilized to generate $p$-values. Figure \ref{fig: application case 1 RNA}b plots the sorted $p$-values for FAB test (red) and traditional two-sided UMPU test (blue) below $\alpha = 0.05$. Clearly, FAB results in more overall rejections than UMPU. The power also increased from UMPU's 31 correct rejections to 38 correct rejections among the 165 alternative hypotheses. 

\begin{figure}[H]
    \centering
    \includegraphics[width = 0.85\linewidth]{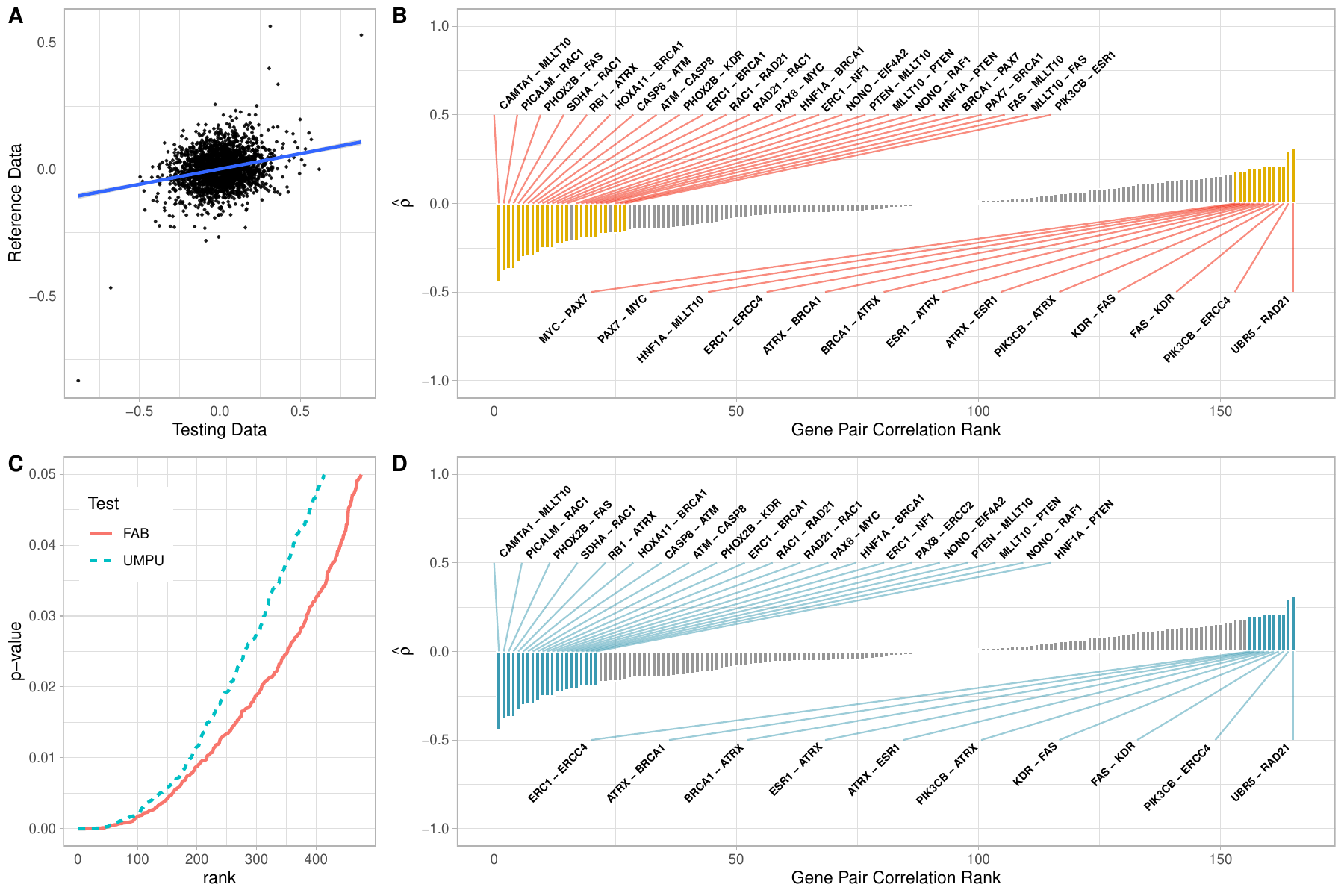}
    \caption{FAB vs UMPU under case 1: RNAi assisted by CRISPR. A: Linear relationship between lung breast correlation coefficients and that of breast cancer. C: Ranked FAB and UMPU $p$-values. FAB results in more $p$-values smaller than 0.05. B: Rejected gene pairs of FAB, D: rejected gene pairs of UMPU. FAB results more rejections than UMPU.}
    \label{fig: application case 1 RNA}
\end{figure}

We applied an identical data pre-selection strategy for the lung cancer and breast cancer datasets in case 2. After pre-selection, 
the lung cancer testing dataset and breast cancer auxiliary dataset contain 206 and 61 measurements for 67 measured genes, with 162 gene pairs declared as correlated. Figure \ref{fig: application case 2 Lung} (a) demonstrates a strong positive linear relationship between $\widehat{\mathbf{Z}}$ and $\widehat{\mathbf{Z}}_\text{ext}$, with $R^2 = 0.3463$. In the linking model, we let $\mathbf{W}^{(m)} = [ \mathbf{1}_{1:{p}_m}, \widehat{\bm{z}}_{1:{p}_m}^{\text{ext}}]$. We also applied ridge shrinkage at the level of $\lambda=0.01$ on the estimator of $\bm{\eta}^{(m)}$. Similar to the group assignment in the first case, each test group is also comprised of 120 individual tests. As indicated in Figure \ref{fig: application case 2 Lung}b, FAB rejects more hypotheses than UMPU does in general, with 96 correctly rejected hypotheses as opposed in 88 for UMPU among the 162 ``true'' alternatives. 

\begin{figure}[H]
    \centering
    \includegraphics[width = 0.85\linewidth]{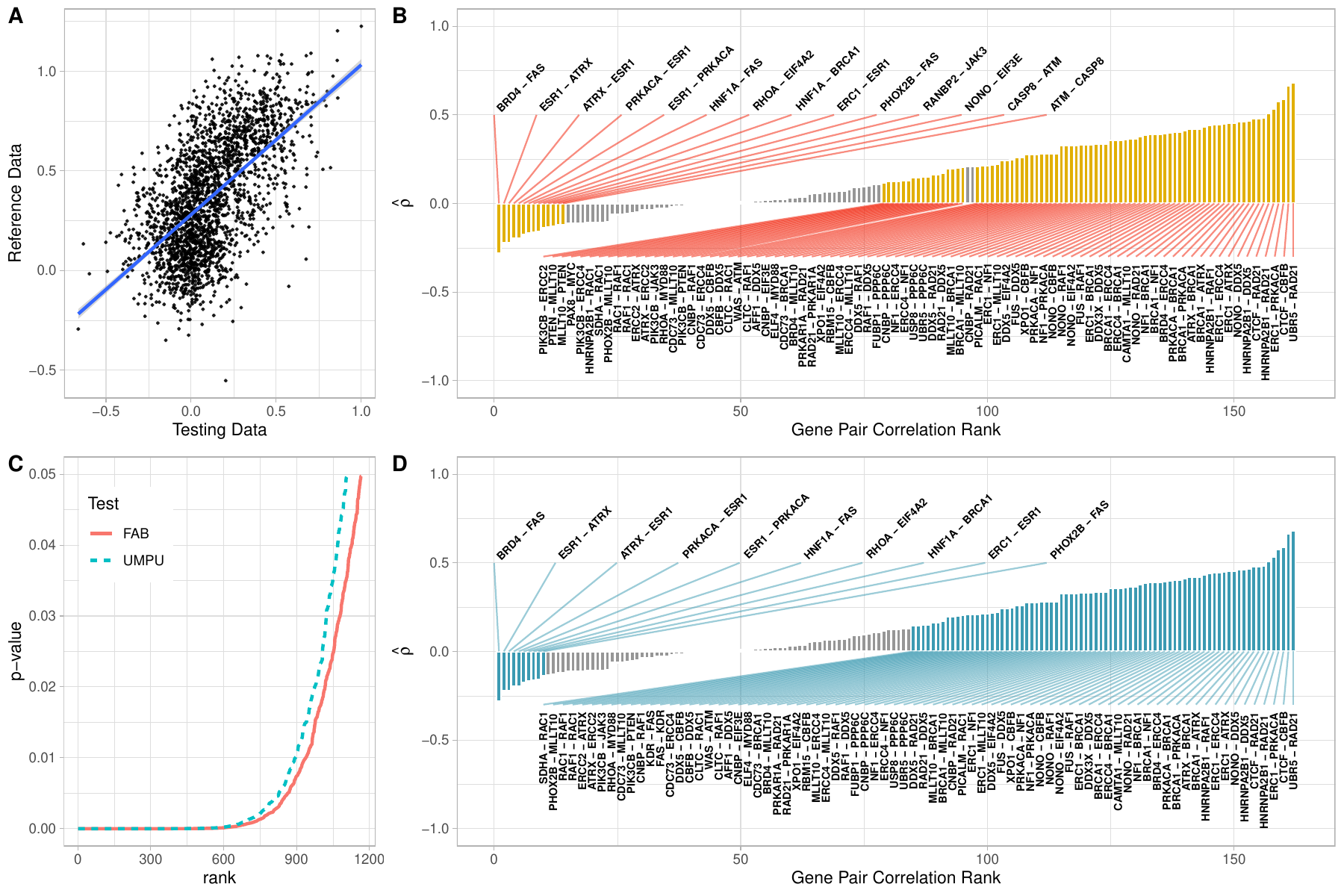}
    \caption{FAB vs UMPU under case 2: lung cancer assisted by breast cancer. A: Significant linear relationship between lung breast correlation coefficients and that of breast cancer. C: Ranked FAB and UMPU $p$-values. FAB results in more $p$-values smaller than 0.05. B: Rejected gene pairs of FAB, D: rejected gene pairs of UMPU. FAB results more rejections than UMPU.}
    \label{fig: application case 2 Lung}
\end{figure}

In addition to comparing with UMPU, we also compare FAB with the AdaPT method designed to incorporate outside information in FDR control \citep{lei2018adapt}.  In contrast to our approach, AdaPT takes a set of pre-computed independent $p$-values as input and uses available side information to help define a sequence of adaptive thresholds. The $p$-values are then compared sequentially against this sequence of thresholds to determine rejections. FAB, on the other hand, uses available side information to generate $p$-values, which are then subject to traditional p-value thresholds that ignore the side information.  As AdaPT uses side information to directly determine rejection thresholds, we anticipate that discoveries made by AdaPT may be less robust to the quality and informativeness of the external information.

To compare the three approaches, we apply the vanilla Benjamini-Hochberg (BH) \citep{benjamini1995controlling} procedure on UMPU $p$-values, vanilla BH procedure on FAB $p$-values, and take UMPU $p$-values as input for the AdaPT method. In Figure \ref{fig: FDR case 1 RNA}a and Figure \ref{fig: FDR case 2 Lung}a, the numbers of discoveries under a series of target FDR thresholds are recorded for UMPU, FAB, and AdaPT under both cases 1 and 2. FAB results in uniformly more discoveries than UMPU, but its advantage over AdaPT depends on the quality of the external dataset. In the first case, as demonstrated in Figure \ref{fig: application case 1 RNA}, $\widehat{\mathbf{Z}}_{\text{ext}}$ explains less variation ($R^2 = 0.0637$) in $\widehat{\mathbf{Z}}$ than in the second case. Thus, the outside information is less informative in the first case, and AdaPT resulted in fewer discoveries in the first case compared to the second case. In contrast, FAB had a similar advantage over the UMPU in both cases. Our conjecture is that FAB can better take advantage of noisy side information.  

\begin{figure}[H]
    \centering
    \includegraphics[width = 0.7\linewidth]{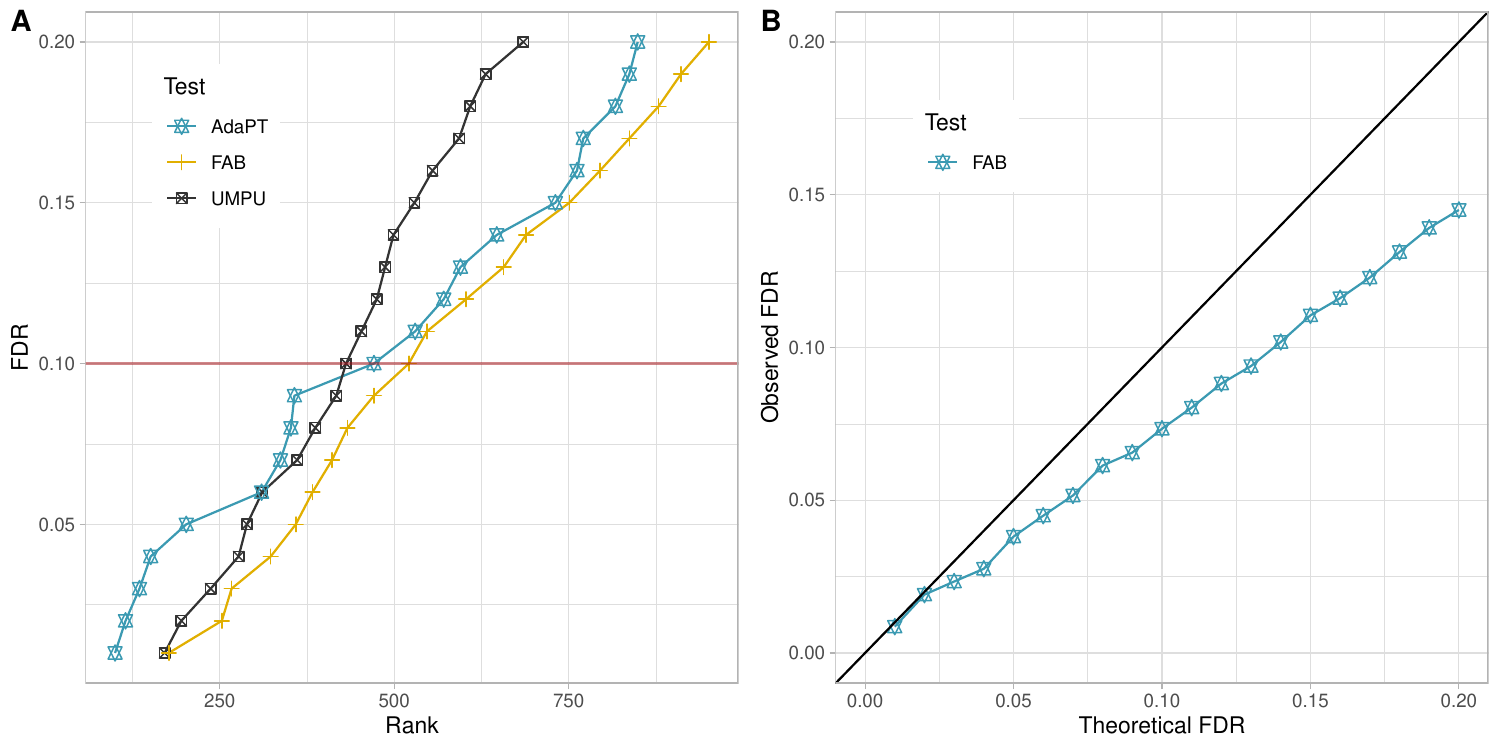}
    \caption{Discoveries comparison between UMPU, FAB, and AdaPT under FDR control: Case 1, RNAi assisted by CRISPR. A: total number of discoveries. AdaPT results in more discoveries uniformly. B: Simulated data's observed vs theoretical FDR for FAB. FAB maintains FDR control.}
    \label{fig: FDR case 1 RNA}
\end{figure}

Due to sharing of information via the linking model within test groups, FAB $p$-values tend to be positively correlated. Hence, the BH procedure may be conservative.  To investigate FDR control, we generated simulated data having identical configuration, including the sample size $m$ and data dimension $q$, with the real datasets in both cases. We repeated the same FAB analyses as used for the real data across 10 simulation replicates for each case.
In Figure \ref{fig: FDR case 1 RNA}b and Figure \ref{fig: FDR case 2 Lung}b, we recorded the observed FDR against the theoretical FDR on the simulated data for each case. FAB controls FDR at the specified levels in each case.

\begin{figure}[H]
    \centering
    \includegraphics[width = 0.7\linewidth]{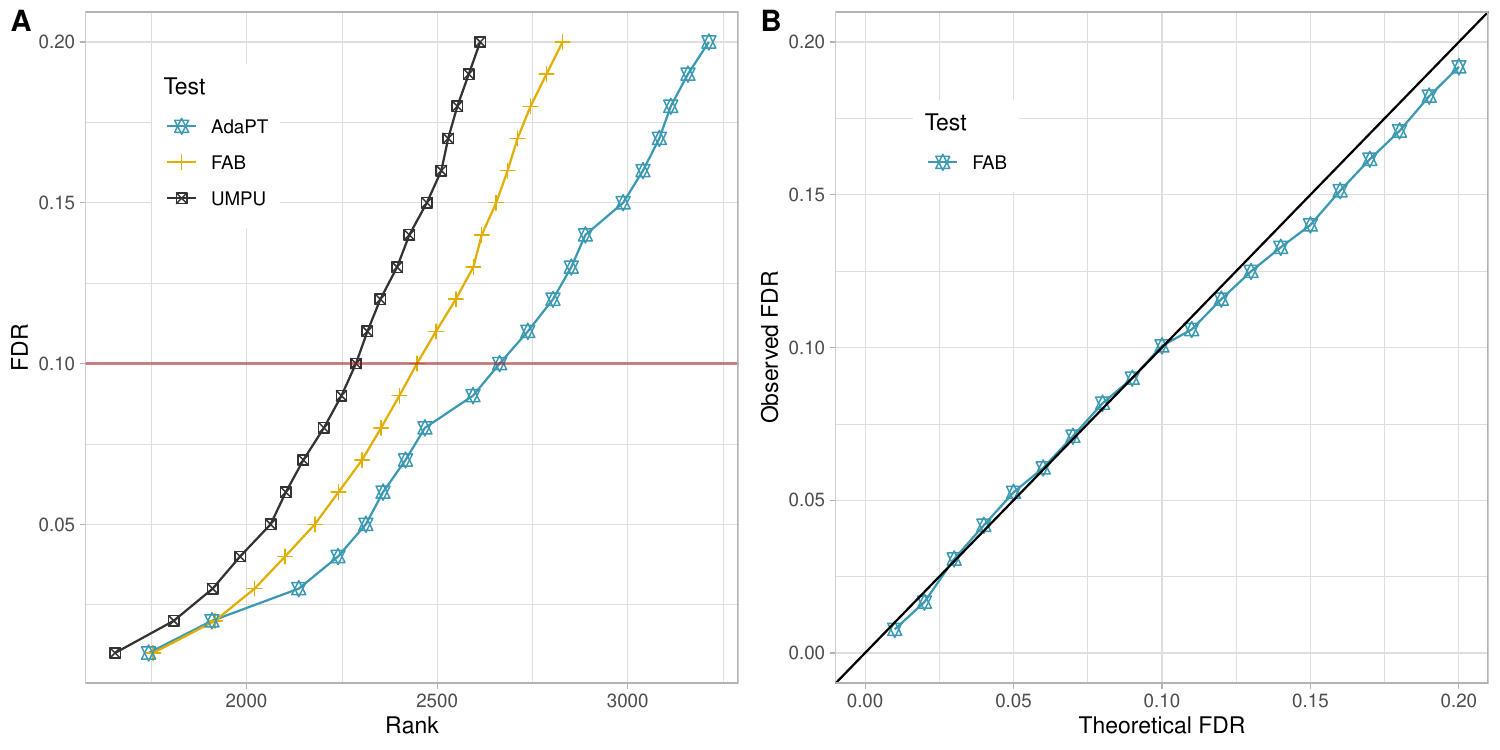}
    \caption{Discoveries comparison between UMPU, FAB, and AdaPT under FDR control: Case 2, Lung assisted by Breast. A: total number of discoveries. AdaPT results in more discoveries uniformly. B: Simulated data's observed vs theoretical FDR for FAB. FAB maintains FDR control.}
    \label{fig: FDR case 2 Lung}
\end{figure}


\section{Discussion}

This articles develops a frequentist assisted by Bayes (FAB) testing methodology for support recovery of correlation structure. Our work has demonstrated the flexibility of the FAB framework in learning from indirect information to assist hypothesis testing on correlation coefficients both when indirect information is sourced directly from external datasets or different tests within the same dataset. The simulation results have demonstrated improvement of power while maintaining asymptotic control of Type I error in both the independent and bootstrap cases. The real data application has also illustrated the methodology's capability to improve power while still offering FDR control empirically. Our results suggests that the FAB correlation structure testing framework is a ``Type I error safe'' yet flexible framework that allows customization of the predictive linking model to improve power.

Our methodology was not explicitly developed with a focus on controlling FDR and multiple testing error.  Instead we have focused on developing an approach to include outside information in obtaining $p$-values that will lead to increased power in correlation structure testing if the outside information was indeed informative.  One can then use these $p$-values as inputs into any appropriate FDR or other multiple testing method.  For simplicity we focused on Benjamini-Hochberg in our illustration.  However, given the correlation in the $p$-values it may be more appropriate to use alternative procedures that accommodate dependence among the $p$-values \citep{efron2007correlation, romano2008control, sun2009large, fan2012estimating, Ramdas2019}

\if1\blind
{
    \section{Acknowledgement}
    The authors thank Jordan Bryan for helpful conversations. This work was partially supported by National Institute of Health grants R01-ES028804 and R01-MH118927. 
} \fi

\bigskip

\begin{center}
{\large\bf SUPPLEMENTARY MATERIAL}
\end{center}

Proof of Theorem \ref{thm: consistency control} and \ref{thm:bootstrap consistency of big matrix} and the link to the Github repository, including reproducible code, reports, and knitted html files are included in the supplementary material.

\bibliography{FABcorr}
\end{document}